\documentclass[a4paper,11pt]{article}
\usepackage{auxlib}

\usepackage{fullpage}
\usepackage{times}

\usepackage{url}
\usepackage{iftex}\ifPDFTeX\usepackage[utf8]{inputenc}\fi
\usepackage{color}
\usepackage{amsmath,amssymb,amsthm}

\usepackage{empheq}
\usepackage{float}       
\usepackage{algorithm}
\usepackage{algpseudocode}

\usepackage{booktabs,array,tabularx,longtable,placeins}
\usepackage{microtype}
\usepackage{enumitem}
\usepackage{algorithm}
\usepackage{algpseudocode}

\algtext*{EndIf}
\algtext*{EndWhile}
\algtext*{EndFor}

\usepackage{libertine}

\usepackage{iftex}
\ifPDFTeX

\usepackage[a4paper, total={170mm,257mm}, left=20mm, top=20mm]{geometry}

\newlength\mylen

\def\QQ{\mathbb Q}

\def\cH{\mathcal H}

\def\cM{\mathcal M}
\def\cN{\mathcal N}

\newcommand{\np}{\ensuremath{\mathrm{NP}}}

\newcommand{\LP}{\ensuremath{\mathsf{LP}}}

\newcommand{\WMMS}{\operatorname{WMMS}}

\newcommand{\NN}{\mathcal N}
\newcommand{\MM}{\mathcal M}

\newcommand{\astar}{\alpha_\star}

\newcommand{\dd}{\,\mathrm d}

\title{Improving the Bounds on Approximating Weighted Maximin Share for Chores}
\author{
	Trung Thanh Nguyen\thanks{Faculty of Data Science and Artificial Intelligence, National Economics University, Hanoi 11616, Vietnam} 
	\and 
	Khaled Elbassioni\thanks{Khalifa University of Science and Technology, Abu Dhabi 127788, UAE} 
	\and 
	J\"{o}rg Rothe\thanks{Institut f\"{u}r Informatik, MNF, Heinrich-Heine-Universit\"{a}t D\"{u}sseldorf, 40225, D\"{u}sseldorf, Germany} 
}
\date{\today}

\begin{document}
	\maketitle
	\thispagestyle{plain}
	
	\begin{abstract}
		We investigate the problem of allocating indivisible chores among agents with arbitrary positive weights and additive nonnegative cost functions. An allocation is said to be $\alpha$-weighted-maximin-share (WMMS) fair if the cost incurred by each agent is at most $\alpha$ times her weighted maximin share. We establish that every instance admits a $4.124234528$-WMMS allocation. Moreover, for instances in which all weights and costs are rational numbers represented in binary, we give an algorithm that computes a $4.12424$-WMMS allocation in time polynomial in the number of agents, the number of chores, and the total binary encoding length of the input. Complementing these upper bounds, we construct a family of instances with strictly positive, identically ordered costs for which the optimal WMMS approximation factors converge to $e$. Together, our results substantially narrow the previously best-known gap of $[2,12]$ established by Li et al.~\cite{LWX26} to $[e,4.12424]$.
	\end{abstract}
	
	\noindent\textbf{Keywords.} Fair division; indivisible chores; weighted maximin share; approximation algorithms; linear programming; scheduling.
	
	\section{Introduction}\label{sec:introduction}
	Fair division is a fundamental problem in economics, mathematics, and computer science, concerned with allocating resources or responsibilities among agents in a manner that is perceived as equitable~\cite{lang-rothe-2015,maudet-2016,AmanatidisBFV22}. It arises naturally in a wide range of applications, including the allocation of public resources, the distribution of household duties, workload assignment, and the sharing of undesirable tasks~\cite{GoldmanP14}. A central challenge is to formalize fairness when agents have heterogeneous preferences or costs. Several influential fairness criteria have been proposed to address this challenge. {\em Envy-freeness} (EF) requires that no agent prefer another agent's allocation to her own~\cite{Varian1974}, while its relaxations, {\em envy-freeness up to one item} (EF1)~\cite{LiptonMMS04,budish2011} and {\em envy-freeness up to any item} (EFX), accommodate the indivisibility of items by allowing envy to be eliminated through the removal of an appropriate item~\cite{CaragiannisKMP016}. {\em Proportionality} requires that each agent receive a share commensurate with an equal division of the total value or cost~\cite{Steinhaus}. Another prominent notion is the {\em maximin share} (MMS), which provides an agent with a fairness benchmark based on what she could guarantee herself by partitioning the items into bundles and receiving the least favorable bundle~\cite{budish2011}. These criteria capture different aspects of fairness and have motivated extensive research on the existence, computation, and approximation of fair allocations, particularly when items are indivisible and exact fairness guarantees may be unattainable.
	
	The classical maximin-share criterion implicitly assumes that all agents have equal entitlements. In many applications, however, agents may have different rights, obligations, or capacities, making equal treatment inappropriate. The {\em weighted maximin share} (WMMS) extends the MMS benchmark to settings in which agents are assigned positive weights representing their relative entitlements or responsibilities~\cite{FGH19}. Informally, in the allocation of indivisible chores, an agent evaluates her weighted maximin share by considering how she would partition all chores among agents with different weights, anticipating that she might receive the bundle that is most costly relative to its recipient's weight. This benchmark reflects both the agent's subjective costs and the asymmetry in agents' entitlements. When all weights are equal, WMMS reduces to the classical MMS notion. It is also closely related to {\em weighted proportionality}~\cite{SUKSOMPONG2025106519}, which assigns each agent a target cost proportional to her weight, and to weighted variants of envy-freeness. Unlike proportionality, however, WMMS explicitly accounts for indivisibility and the combinatorial structure of the chores. Moreover, while envy-freeness compares an agent's allocation with those of other agents, WMMS provides an individual fairness guarantee defined independently of the allocation ultimately chosen.
	
	The study of WMMS is particularly well motivated by practical settings in which undesirable tasks must be distributed among participants with unequal responsibilities. For example, in workplace scheduling, employees may have different contractual workloads, working hours, or capacities, so that assigning everyone the same amount of work would not necessarily constitute a fair outcome. Similarly, the distribution of teaching duties among faculty members may need to account for differences in appointments, administrative responsibilities, or teaching commitments. Other examples include assigning maintenance tasks among organizations with different resource capacities, distributing computational jobs among heterogeneous participants, and allocating household chores among individuals with different agreed-upon responsibilities. In such settings, indivisibility is often unavoidable: a task, shift, or assignment may have to be allocated in its entirety to a single participant. Furthermore, agents may experience different costs for performing the same task. WMMS offers a natural framework for incorporating these heterogeneous costs and unequal entitlements into a common fairness benchmark. 
	
	Beyond the existence of fair allocations, their efficient computation is a central concern. Even when an allocation satisfying a desired fairness criterion is guaranteed to exist, finding such an allocation may be computationally intractable. In particular, the indivisibility of chores and the heterogeneity of agents' costs create combinatorial difficulties, while unequal weights introduce additional challenges in balancing individual fairness guarantees. These obstacles motivate the study of {\em approximate WMMS} allocations, in which each agent's incurred cost is allowed to exceed her WMMS benchmark by a bounded multiplicative factor. The objective is to determine the smallest approximation factor that can be guaranteed for every instance and, equally importantly, the best factor achievable by a polynomial-time algorithm~\cite{FGH19,ACL19,AzizRaucheckerSchryenWalsh2017,Aziz2016,WLL24,LWX26,SS25}. These two questions are conceptually distinct: an existential guarantee establishes the intrinsic limits of fairness, whereas an efficient algorithm determines whether such guarantees can be realized in large-scale applications. Complementary impossibility results, obtained by constructing instances for which no allocation achieves a better approximation factor, help identify the fundamental barriers imposed by indivisibility and unequal entitlements. Narrowing the gap between achievable approximation guarantees and impossibility bounds is thus a central objective in the algorithmic study of weighted fair division of chores.

	This paper aims to narrows both sides of that gap. The upper-bound proof does not round the weights or the item costs to a geometric grid, and it does not require a common ordering of the original valuation rows. The lower-bound construction, in contrast, is deliberately structured: its rows have one common ordering, all item costs are positive, and its number of chores equals its number of agents.
	
	\section{Related work and Our Contribution}
	
	\subsection{Related work}\label{sec:related}
	
	Despite the recent surge of research on MMS allocations of indivisible goods and chores (see, e.g.,~\cite{KurokawaPW18,AmanatidisMNS17,BarmanKrishnamurthy2017,GhodsiEtAl2018,AmanatidisBFV22,GuoLiDeng2023,AHannaneh2024,GargTaki2021,AkramiGargSharmaTaki2023}), relatively little is known about the weighted setting, in which agents have unequal entitlements or weights~\cite{Suk25}.
	
	\paragraph{Weighted MMS for goods.}
	Farhadi et al.~\cite{FGH19} were the first to study the notion of weighted maximin share (WMMS). For goods, they constructed an instance showing that no allocation can guarantee better than an $n$-approximation to WMMS. They complemented this impossibility result with a simple ordinal algorithm achieving a matching $n$-approximation guarantee; namely, every agent receives a bundle whose value is at least $1/n$ of her WMMS.
	
	
	\paragraph{Weighted MMS for chores.}
	It is not immediate whether results for goods in one setting carry over to the other~\cite{Aziz2016}. Aziz, Chan, and Li~\cite{ACL19} established the existence of an $n$-approximate WMMS allocation and exhibited a lower bound of $4/3$. They also introduced the notion of optimal WMMS, which benchmarks an algorithm against the smallest common relaxation factor for which an allocation exists on the given instance. An approximation guarantee for this instance-dependent benchmark is conceptually different from a universal upper bound on the relaxation factor itself. Wang, Li, and Lu~\cite{WLL24} obtained an $O(\log n)$ approximation guarantee and determined the exact worst-case ratio for two agents as $(1+\sqrt{3})/2$. More recently, Li, Wang, and Xing~\cite{LWX26} proved an upper bound of $12$ together with an asymptotic lower bound of $2$, which constitute the most immediate benchmarks for our results. Independently, Seddighin and Seddighin~\cite{SS25} established a constant-factor guarantee of $20$.
	
	\begin{table}[t]
		\centering
		\small
		\begin{tabular}{@{}p{0.49\linewidth}cc@{}}
			\toprule
			Work & Existence upper bound & Universal lower bound\\
			\midrule
			Aziz--Chan--Li~\cite{ACL19} & $n$ & $4/3\approx 1.333$\\
			Wang--Li--Lu~\cite{WLL24} & $O(\log n)$ & $(1+\sqrt3)/2\approx 1.366$\\
			Li--Wang--Xing~\cite{LWX26} & $12$ & $2$\\
			Seddighin--Seddighin~\cite{SS25} & $20$ & ---\\
			{\bf This work} &  $\pmb{4.124234528}$ & $\pmb{e\approx 2.71828}$\\
			\bottomrule
		\end{tabular}
		\caption{Results for approximating WMMS allocations with chores.}
		\label{tab:comparison}
	\end{table}

	\subsection{Our contribution}
	
	We studied the allocation of indivisible chores among agents with arbitrary positive weights and additive nonnegative costs. Our contribution are as followss. 
	\begin{enumerate}
		\item We prove that every instance admits a $4.124234528$-WMMS allocation, and that a $4.12424$-WMMS allocation can be computed in polynomial time for rational input. The upper bound follows from a lossless row-wise packing normalization, a restricted fractional assignment problem, and a continuum of dual-price inequalities. Integrating these inequalities cancels their scale-dependent correction terms; assignment rounding then incurs only the largest supported chore cost.
		\item In the opposite direction, we construct strictly positive, identically ordered instances whose optimal WMMS factors converge to $e$. The construction uses geometrically growing groups, nested costs, and integral row-wise packing certificates. It establishes an unconditional existential lower bound, not an $\np$-hardness-of-approximation result. 
	\end{enumerate}
	Together, the results place the optimal universal existence factor between $e$ and $4.124234528$. 


	\section{Preliminaries}\label{sec:preliminaries}
	A problem instance of fair allocation is $I=(\cN,\cM,w,v)$, where $\cN=[n]$ is the set of $n$ agents, $\cM=[m]$ is the set of $m$ indivisible chores (or items); each agent is associated with a positive {\em weight} $w_i$, and an {\em additive cost functions} $v_i:\cM\to\QQ_{\ge0}$. For any subset $X\subseteq \MM$, $v_i(X)=\sum_{h\in X}v_{ih}$, where $v_{ih}$ denotes the cost of chore $h$ for the agent $i$.  Unless stated otherwise, we assume $\sum_{i\in\NN}w_i=1$, and denote  $w_*:=\max_{j\in\cN}w_j$ as the maximum weight among agents. An {\em integral allocation} (or {\em assignment}) is a complete partition $A=(A_i)_{i\in\cN}$ of $\cM$ into $n$ disjoint subsets. We may use a vector  $\bx=(x_{ih})_{i\in\cN,h\in\cM}$ to present a {\em fractional  allocation}, where $x_{ih}\in\mathbb{Q}_{\ge 0}$ is the fraction of the chore $h$ assigned to agent $i$, such that $\sum_{i\in\cN}x_{ih}=1$ for every chore $h\in\cM$.
	\begin{definition}[Weighted maximin share (WMMS)]
		The weighted maximin share of an agent $i$ is defined as
		\begin{equation}\label{eq:wmms-intro}
			\mu_i=\WMMS_i
			=w_i\min_{(A_1,\ldots,A_n)}\max_{j\in\NN}\frac{v_i(A_j)}{w_j},
		\end{equation}
		where the minimum ranges over complete integral partitions $A$.
	\end{definition} 
	The agent evaluates every bundle using her own costs, but each bundle is compared with its designated recipient's weight. Computing the weighted maximin share $\WMMS_i$ is known to be $\np$-hard but can be well approximated~\cite{HS88}. In fact,  computing $\WMMS_i$ is exactly equivalent to the minimum makespan  problem on uniform machines and thus can be approximated in polynomial time to within any factor of $1+\delta$, for any fixed $\delta\in(0,1)$.

	The defining minimum (\ref{eq:wmms-intro}) is attained because the instance $I$ has finitely many $O(n^m)$ integral partitions. An allocation in which every agent incurs a cost no greater than her weighted maximin share (WMMS) does not always exist, even in the unweighted setting. This motivates the study of approximate WMMS allocations, which relax the fairness requirement by allowing each agent's cost to exceed her WMMS by a bounded multiplicative factor.
	\begin{definition}[Approximate $\alpha$-$\WMMS$]
		For $\alpha\ge 1,$ an allocation $A$ is $\alpha$-WMMS if $v_i(A_i)\le\alpha\mu_i$ for every agent $i\in\cN$. The value of $\alpha$ may depend on the size of the problem instance.
	\end{definition} 
	We may assume that $v_i(\mathcal M)>0$ for every agent $i$. Indeed, if $v_i(\mathcal M)=0$ for some agent $i$, assigning all chores to this agent and leaving the other agents with empty bundles yields an exact WMMS allocation. Our goal is to determine the smallest constant $\alpha\geq 1$ such that every instance admits an $\alpha$-WMMS allocation. To this end, we establish several useful properties of the normalized cost functions, which will serve as key ingredients in deriving a universal approximation guarantee.
	

	\begin{lemma}[Normalized cost function]\label{lem:normalization}
		Suppose every $\mu_i>0$, and define the additive cost function $c_i$ with $c_{ih}=w_iv_{ih}/\mu_i$. Then, for each $i$ there is an integral assignment $(B_j^i)_{j\in\NN}$ such that
		\begin{equation}\label{eq:packing}
			c_i(B_j^i)\le w_j,\qquad(j\in\NN).
		\end{equation}
		We call $(B^i_j)_{j\in\mathcal N}$ the row-$i$ packing.  In particular, $c_i(\MM)\le1$ and $c_{ih}\le w_*$ for every chore $h$ and agent $i$. An integral assignment with $c_i(A_i)\le Cw_i$ for every $i$ gives a $C$-$\WMMS$ allocation for the original instance.
	\end{lemma}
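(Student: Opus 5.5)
The plan is to unwind the definition of $\mu_i$ and rescale. For each agent $i$, the minimum in (\ref{eq:wmms-intro}) is attained, since there are finitely many integral partitions. So I fix a minimizing partition $(B^i_j)_{j\in\NN}$, which gives
\[
\max_{j\in\NN}\frac{v_i(B^i_j)}{w_j}=\frac{\mu_i}{w_i}.
\]
Hence $v_i(B^i_j)\le w_j\mu_i/w_i$ for every $j$. Because $c_i$ is additive, $c_i(B^i_j)=\frac{w_i}{\mu_i}v_i(B^i_j)$, and multiplying the previous inequality by the positive scalar $w_i/\mu_i$ yields (\ref{eq:packing}). This step uses $\mu_i>0$ and $w_i>0$.

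For the consequences, I sum (\ref{eq:packing}) over $j$. The bundles $B^i_j$ partition $\MM$, so
\[
c_i(\MM)=\sum_{j\in\NN}c_i(B^i_j)\le\sum_{j\in\NN}w_j=1,
\]
using the normalization $\sum_j w_j=1$. For a single chore $h$, let $j$ be the index with $h\in B^i_j$. Costs are nonnegative, so $c_{ih}\le c_i(B^i_j)\le w_j\le w_*$.

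Finally, suppose $A$ is integral with $c_i(A_i)\le Cw_i$ for every $i$. Then $v_i(A_i)=\frac{\mu_i}{w_i}c_i(A_i)\le C\mu_i$, which is exactly the $C$-$\WMMS$ condition. I do not expect a real obstacle. The only points needing care are that the minimum is attained, which was already noted after (\ref{eq:wmms-intro}), and that the hypothesis $\mu_i>0$ makes the rescaling well defined.
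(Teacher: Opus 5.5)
Your proof is correct and follows essentially the same route as the paper: fix a WMMS-attaining partition for row $i$, rescale its bundle inequalities by $w_i/\mu_i$, sum them to get $c_i(\MM)\le 1$, bound each single chore by the capacity of its bin, and unscale to get the $C$-WMMS conclusion. Your write-up is slightly more explicit than the paper's about where nonnegativity of costs and positivity of $\mu_i$ and $w_i$ are used.
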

	\begin{proof}
		Take a WMMS-defining partition in row $i$ and multiply its inequalities by $w_i/\mu_i$. Summing the resulting inequalities gives the total-cost bound; each item lies in a bin of capacity at most $w_*$. Finally, $v_i(A_i)=(\mu_i/w_i)c_i(A_i)\le C\mu_i$.
	\end{proof}
	
	The upper-bound proof works for \emph{any} nonnegative matrix satisfying \cref{eq:packing}, even if its normalized costs are strictly smaller than its weights. We refer to this hypothesis as the \emph{packing promise}. It is stronger than the scalar inequalities $c_i(\MM)\le1$, since the witnesses are whole-item packings. The proof uses their existence, not a common partition or an ordering shared across rows.
	
	If $\mu_i=0$ for some agent, every item costs zero to her. Indeed, a zero makespan with positive bin weights permits no positive-cost item. Assigning all chores to that agent solves the instance. Empty chore sets and the one-agent case are likewise immediate. We may therefore assume positive costs and a nonempty chore set whenever ratios are used.

	\section{A universal 4.124234528 upper bound}\label{sec:upper}
	
	We first introduce a function $\alpha:\Omega\subseteq\mathbb{R}^2\rightarrow \mathbb{R}_{>0}$, where $\Omega=\{(\rho,r)\in \mathbb{R}^2~:~1<\rho<r\}$, and
	\begin{equation}\label{eq:alpha-definition}
		\alpha(\rho,r)=\rho+
		\frac{1+\ln(r/\rho)}{(1+\ln\rho)/\rho-1/r}.	
	\end{equation}
	Let $\astar=\min_{(\rho,r)\in \Omega}\alpha(\rho,r).$
	\begin{lemma}\label{lem:compute:alpha}
		$\astar=4.124234527128370976\ldots$
	\end{lemma}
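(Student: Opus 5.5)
The plan is to eliminate $r$ in closed form by a first-order condition, reduce $\alpha$ to a function of $\rho$ alone, and then minimize that one-dimensional function numerically with a rigorous certificate.

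\textbf{Eliminating $r$.} Fix $\rho>1$ and write $a=(1+\ln\rho)/\rho$, $N=1+\ln(r/\rho)$ and $D=a-1/r$. The domain requires $D>0$, i.e.\ $r>1/a$. Then
\[
\partial_r\alpha=\frac{D/r-N/r^2}{D^2},
\]
so a critical point satisfies $rD=N$, that is,
\[
ar=2+\ln(r/\rho).
\]
At such a point $N/D=r$, hence $\alpha=\rho+r$. Setting $s=r/\rho$, the condition becomes $(1+\ln\rho)\,s=2+\ln s$.

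\textbf{Uniqueness of the critical point in $r$.} The map $r\mapsto rD-N$ is $ar-2-\ln(r/\rho)$, which is convex in $r$. The sign of $\partial_r\alpha$ is that of $rD-N$. I will show that this sign is negative and then positive on the admissible interval $r>\max(\rho,1/a)$, so the critical point is the unique minimizer in $r$. Consequently
\[
\alpha_\star=\min_{\rho>1}\rho\,(1+s(\rho)),
\]
where $s(\rho)>1$ solves $(1+\ln\rho)s=2+\ln s$. If that root falls outside the range $s>1$, the infimum in $r$ sits on the boundary, and that case is handled in the next step.

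\textbf{Ruling out the boundary.} I will verify that no boundary of $\Omega$ gives a smaller value.
\begin{itemize}
\item As $r\to\rho^+$ one has $\alpha\to\rho+\rho/\ln\rho$. This is minimized where $\ln^2\rho+\ln\rho-1=0$, giving roughly $4.857$.
\item As $r\to\infty$ or $\rho\to\infty$, clearly $\alpha\to\infty$.
\item As $\rho\to1^+$ one has $D\le 1-1/r$, so $\alpha\ge 1+r(1+\ln r)/(r-1)$. The minimum of the right-hand side is about $1+3.146$, which is at least $4.14$.
\end{itemize}
All of these exceed $4.1243$. By continuity the infimum is therefore attained in the interior, at a critical point.

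\textbf{Interior critical point in $\rho$.} Differentiating in $\rho$ and using $N=rD$ together with $D'(\rho)=-\ln\rho/\rho^2$ gives
\[
D=\frac1\rho-\frac{r\ln\rho}{\rho^2}.
\]
Combined with the $r$-condition, this yields a two-equation system in $(\rho,r)$. Equivalently, it is the stationarity condition of the one-variable function $g(\rho)=\rho(1+s(\rho))$, with $s'(\rho)$ obtained by implicit differentiation.

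\textbf{Certification.} I will solve this system by Newton iteration and certify the root with interval arithmetic, for instance via a Krawczyk/Newton interval test on a small box. Evaluating $\rho+r$ on that box gives $\alpha_\star=4.124234527128\ldots$ to the stated digits.

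\textbf{Main obstacle.} The hardest part is the rigorous global part: showing that $g$ has a single interior minimum, or at least that $g>4.1243$ outside the certified box. The $\rho\to1$ bound in particular is tight, with $4.146$ against $4.124$, so it needs an explicit verified estimate. I would handle this with a coarse interval subdivision of a compact range of $\rho$, for example $\rho\in[1,10]$, using monotonicity bounds on $s(\rho)$.
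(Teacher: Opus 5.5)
Your proposal is correct. Its first half coincides with the paper's: the $r$-stationarity condition $(1+\ln\rho)s=2+\ln s$ with $s=r/\rho$, and the identity $\alpha=\rho+r$ at that point.

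The routes diverge in the one-dimensional minimization. You keep $\rho$ as the free variable, with $s(\rho)$ implicit. You then plan to certify the two-equation system by Newton/Krawczyk interval arithmetic and to control the global behaviour by subdivision. The paper instead inverts the relation, $\rho=\exp((2+\ln x)/x-1)$ with $x=s$, and minimizes the explicit function $(1+x)\exp((2+\ln x)/x-1)$. Its stationarity condition is the single scalar equation $x^2=(1+x)(1+\ln x)$.

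Your own implicit differentiation already leads there. It gives $s'(\rho)=-s^2/(\rho(1+\ln s))$, hence $g'(\rho)=1+s-s^2/(1+\ln s)=-k(s)/(1+\ln s)$ with $k(s)=s^2-(1+s)(1+\ln s)$. As $\rho$ runs over $(1,e)$, $s(\rho)$ decreases strictly from $s_0$ (the root of $s_0=2+\ln s_0$, where $k(s_0)=1$) to $1$ (where $k(1)=-1$). Moreover $k''(s)=2-1/s+1/s^2>0$. So $g$ has exactly one critical point, and it is a minimum.

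The ``main obstacle'' you identify therefore dissolves analytically, with no subdivision of a $\rho$-range needed. The only numerical work left is a certified enclosure of one root of a scalar equation, which is what the stated digits require; the paper only quotes floating-point values for it.

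In the other direction, your treatment of the boundary of $\Omega$ is more explicit than the paper's: the case $\rho\ge e$, $r\to\rho^+$, $r\to\infty$, and $\rho\to1^+$. The paper discusses only $1<\rho<e$ and asserts uniqueness of the interior minimizer without argument.

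Two minor points:
\begin{itemize}
\item $D>0$ holds automatically on $\Omega$, since $(1+\ln\rho)/\rho>1/\rho>1/r$. So the restriction $r>\max(\rho,1/a)$ is just $r>\rho$.
\item Your $\rho\to1^+$ estimate should read $\alpha\ge\rho+r(1+\ln(r/\rho))/(r-1)$. This only tends to the expression you wrote, which is what your limiting argument actually uses.
\end{itemize}
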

	Our main result is stated as follows.
	\begin{theorem}[Universal upper bound]\label{thm:upper-main}
		Every finite additive nonnegative chore instance with positive weights admits an $\alpha(\rho,r)$-WMMS allocation for every $r>\rho>1$. In particular, for  every fixed $\varepsilon>0$, an $(\astar+\varepsilon)$-WMMS allocation can be computed in polynomial time. Explicit rational parameters give a polynomial-time $4.12424$ guarantee.
	\end{theorem}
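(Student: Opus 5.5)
We will work throughout with the normalized costs $c_{ih}$ of \cref{lem:normalization}, so it suffices to produce an integral assignment with $c_i(A_i)\le \alpha(\rho,r)\,w_i$ for every agent. Only the packing promise \cref{eq:packing} will be used: for each row $i$ there is a whole-item packing $(B^i_j)_j$ with $c_i(B^i_j)\le w_j$.

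\textbf{Restricted fractional problem.} Fix $r>\rho>1$. Call chore $h$ \emph{admissible} for agent $i$ if $c_{ih}\le \rho w_i$; large chores are forbidden to $i$. Consider the linear program with variables $x_{ih}\ge0$ supported on admissible pairs, $\sum_i x_{ih}=1$ for each $h$, and load constraints $\sum_h c_{ih}x_{ih}\le \beta w_i$, where $\beta=\alpha(\rho,r)-\rho$. The main claim is that this LP is feasible. Granting it, take a vertex solution. By standard assignment rounding (Lenstra--Shmoys--Tardos, or Shmoys--Tardos for generalized assignment) we obtain an integral assignment in which agent $i$ exceeds her fractional load by at most one admissible chore, costing at most $\rho w_i$. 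This gives $c_i(A_i)\le(\beta+\rho)w_i=\alpha(\rho,r)w_i$.

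\textbf{Feasibility via duality.} Suppose the LP is infeasible. By Farkas' lemma there are prices $p_h$ (for chores) and $y_i\ge0$ (for agents) such that $p_h\le y_i c_{ih}$ for every admissible pair and $\sum_h p_h>\beta\sum_i y_i w_i$. We derive a contradiction by constructing, for each threshold scale $t$, an inequality obtained from the packings. Use the packing of a row $i$ whose dual scale $y_i$ is near $t$. In that packing, a chore placed in the bin of an agent $j$ with $w_j$ not much smaller than $c_{ih}/\rho$ is admissible for $j$, so its price is bounded by $y_j c_{jh}$. Chores that are too large relative to their bin are charged against the bin capacity $w_j$. The result is a family of inequalities indexed by a continuous parameter $s\in[1,r/\rho]$, roughly of the form
\[
\sum_h p_h\,\mathbf 1[\text{scale}\ge s]\le (\text{capacity term})+(\text{correction depending on }s).
\]
Integrating these inequalities against the measure $\mathrm ds/s$ over $[\rho,r]$ (a logarithmic weighting) makes the scale-dependent correction terms telescope or cancel. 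What remains is $\sum_h p_h\le \frac{1+\ln(r/\rho)}{(1+\ln\rho)/\rho-1/r}\sum_i y_iw_i$. This matches $\beta$ exactly and contradicts the Farkas certificate. The factor $1+\ln(r/\rho)$ comes from the integration range plus a boundary term, and $(1+\ln\rho)/\rho-1/r$ from the admissibility loss.

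\textbf{Expected obstacle and the algorithmic part.} The main obstacle is designing the per-scale charging inequality so that the integration produces precisely these constants. It must be done without a common ordering across rows, handling chores that are admissible for nobody in the relevant bin. I would first try grouping agents by $y_iw_i$ on a geometric scale and sweeping a threshold.

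For computation, $\mu_i$ is only approximable within $1+\delta$ via the uniform-machines PTAS [HS88]. Using approximate packings, we may rescale so that the packing promise holds up to $1+\delta$, which inflates the factor to $(1+\delta)\alpha$. We then solve the LP directly in polynomial time, round, and pick rational $(\rho,r)$ near the minimizer from \cref{lem:compute:alpha} with small $\delta$ to get $\astar+\varepsilon$, in particular $4.12424$.
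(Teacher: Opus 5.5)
\textbf{There is a genuine gap: the feasibility step, which is the entire content of the bound, is only asserted, and the mechanism you sketch for it fails.} Your outer framework matches the paper's: the restricted LP on pairs with $c_{ih}\le\rho w_i$, Farkas duality, Shmoys--Tardos rounding that costs at most one extra chore of size $\le\rho w_i$, and normalization by the PTAS benchmarks $\widehat\mu_i$, whose returned schedules witness the packing promise exactly.

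The problem is the charging step. If chore $h$ lies in bin $j$ of row $i$'s packing, you learn only $c_{ih}\le w_j$. You learn nothing about $c_{jh}$. Since rows share no ordering, agent $j$'s cost for $h$ is unrelated to $c_{ih}$. So neither the admissibility of $(j,h)$ nor the bound $p_h\le y_jc_{jh}$ follows. Even if it did, a bound in terms of $c_{jh}$ could not be summed against row $i$'s bin capacities. You also claim that a $\mathrm{d}s/s$ weighting on $[\rho,r]$ produces $\beta(\rho,r)$. For scale inequalities of the natural shape (coefficient $\ln s$ on $P$, $s$ on $W$), it gives coefficients like $\int_\rho^r \ln s\,\mathrm{d}s/s$ instead, not $(1+\ln\rho)/\rho-1/r$.

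\textbf{The missing idea.} The only dual constraint ever used is the one for the row whose packing is invoked, and that row is chosen by weight. For a scale $u$, let $a$ maximize $w_a$ among agents with $\lambda_a\le u$, write $\omega(u)=w_a$, and cap every price at $\rho u\omega(u)$. Then $\min\{p_h,\rho u\omega(u)\}\le uc_{ah}$ for every chore:
\begin{itemize}
\item if $(a,h)$ is eligible, this is just $p_h\le\lambda_a c_{ah}$;
\item otherwise $uc_{ah}>\rho u w_a$.
\end{itemize}
Hence row $a$'s own packing packs the capped prices into bins of capacity $uw_j$, and the bin owners' costs never enter.

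Now take $u=st$ with $s\ge1$. Every chore priced above $u\omega(t)$ has capped size exceeding $u\omega(t)$, because $\rho u\omega(u)>u\omega(t)$. Such chores therefore sit only in bins with $w_j>\omega(t)$, whose owners must have $\lambda_j>t$. Their total capacity is at most $uR(t)$, where $R(t)=\sum_{\lambda_j>t}w_j$ and $\int_0^\infty R(t)\,\mathrm{d}t=W$. So the sweep is over thresholds on the multipliers $\lambda_i$ themselves, with $\omega(t)$ the largest weight below the threshold, not over $y_iw_i$.

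Write this pointwise inequality with the tail functions $E,D$ and integrate against $\mathrm{d}t/t$. This gives $P\ln s+J(s)-\int_\rho^s J(q)q^{-1}\,\mathrm{d}q\le sW$. The correction terms cancel, leaving only the nonnegative $J(r)/r$, exactly under this combination: weight $\mathrm{d}s/s^2$ on $[\rho,r]$, plus the $s=r$ inequality with weight $1/r$. That combination is what yields $\beta(\rho,r)$.

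A minor point on the algorithm: with rational $\rho,r$ you must choose a rational $\widehat\beta>\beta(\rho,r)$. You must also verify explicitly that $(\widehat\beta+\rho)(1+\delta)<4.12424$.
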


	\paragraph{An overview of the Algorithm}
	
	Suppose that the cost functions are normalized so that they admit an integral packing into bins whose capacities correspond to the agents' weights. We  introduce an eligibility threshold $\rho>1$ and restrict assignments to agent–chore pairs whose normalized costs do not exceed $\rho$ times the corresponding agent's weight. On these eligible pairs, we formulate a fractional assignment linear program that minimizes the maximum normalized load across agents. By linear programming duality, its feasibility reduces to establishing an inequality between the total prices of chores and the weighted sum of dual multipliers. Using the row-wise packing property, we show that suitably capped dual prices can be packed into appropriately scaled capacities, yielding a family of inequalities across different thresholds. We further develop a multiscale analysis that combines these inequalities while retaining their scale-dependent correction terms. By integrating over a continuum of scales and carefully combining the resulting inequalities, the correction terms cancel, leading to an improved bound on the fractional assignment loads. We then apply assignment rounding, which increases each agent's normalized load by at most the eligibility threshold $\rho$. Finally, optimizing the parameters yields a universal $4.124234528$-WMMS existence guarantee and a polynomial-time $4.12424$-WMMS approximation algorithm for rational inputs. The main steps are identified by Algorithm~\ref{alg:wmms}. Its correctness and thus the proof of Theorem~\ref{thm:upper-main} is presented in the following sections.

	\begin{algorithm}
		\caption{A polynomial-time $4.12424$-WMMS allocation algorithm}
		\label{alg:wmms}
		\begin{algorithmic}[1]
			\Require Agents $\cN$, chores $\cM$, positive weights
			$(w_i)_{i\in \cN}$ with $\sum_{i\in \cN}w_i=1$, additive costs $(v_i)_{i\in \NN}$
			\Ensure An allocation $A=(A_i)_{i\in \cN}$ satisfying
			$v_i(A_i)\le 4.12424\cdot\WMMS_i$ for all $i$
			
			\State Set $\delta\leftarrow 10^{-6}$, $\rho\leftarrow 689/625$,
			and $\beta\leftarrow 604367/200000$.
			
			\If{$v_i(\MM)=0$ for some agent $i$}
			\State Assign all chores to agent $i$.
			\State \Return the resulting allocation.
			\EndIf
			
			\Statex \textbf{Phase I: Approximate the WMMS benchmarks}
			
			\For{each agent $i\in \cN$}
			\State Apply the $(1+\delta)$-PTAS for makespan
			minimization on uniformly related machines,
			using job sizes $v_{ih}$ and machine speeds $w_j$.
			\State Let $\widehat T_i$ be the makespan of
			the returned schedule.
			\State Set $\widehat\mu_i\gets w_i\widehat T_i$.
			\Comment{$\mu_i\le\widehat\mu_i\le(1+\delta)\mu_i$}
			\EndFor
			
			\Statex \textbf{Phase II: Normalize costs and select eligible edges}
			
			\For{each agent $i\in \cN$ and chore $h\in \cM$}
			\State Set
			$c_{ih}\gets \dfrac{w_i v_{ih}}{\widehat\mu_i}$.
			\EndFor
			
			\State Define $E_\rho\leftarrow \{(i,h)\in \cN\times \cM:
			c_{ih}\le\rho w_i\}.$

			\Statex \textbf{Phase III: Compute a fractional assignment}
			
			\State Find a feasible solution $(x_{ih})_{(i,h)\in E_\rho}$
			to the following linear program:
			\[
			\begin{array}{ll}
				\displaystyle
				\sum_{i:(i,h)\in E_\rho}x_{ih}=1,
				& \forall h\in \cM,\\[2mm]
				\displaystyle
				\sum_{h:(i,h)\in E_\rho}c_{ih}x_{ih}\le \beta w_i,
				& \forall i\in \cN,\\[2mm]
				x_{ih}\ge0,
				& \forall(i,h)\in E_\rho.
			\end{array}
			\]
			
			\Statex
			\Statex \textbf{Phase IV: Round the fractional assignment}
			
			\For{each agent $i\in \cN$}
			\State Sort chores with $x_{ih}>0$ in
			nonincreasing order of $c_{ih}$.
			\State Let
			$q_i\gets\left\lceil\sum_{h\in \cM}x_{ih}\right\rceil$.
			\State Create $q_i$ slots of unit capacity.
			\State Place fractional chore masses $x_{ih}$
			consecutively into these slots,
			following the sorted order.
			\EndFor
			
			\State Construct a bipartite graph $\cH$ between
			chores and slots, connecting a chore
			to each slot containing a positive fraction of it.
			
			\State Find a matching in $\cH$ covering every chore.
			
			\For{each chore $h\in \cM$}
			\State Assign $h$ to the agent owning
			its matched slot.
			\EndFor
			
			\State \Return the resulting integral allocation
			$A=(A_i)_{i\in \cN}$.
			
		\end{algorithmic}
	\end{algorithm}
	
	\subsection{Establishing the existence of the $\alpha$-WMMS allocation}
	
	\subsubsection{The dual-price formulation}
	Let $c_i$ be the normalized cost function of agent $i$ defined as in Lemma~\ref{lem:normalization}. An edge $(i,h)$ is called \emph{eligible} if $c_i(h)\le\rho w_i$.  Fix $(\rho,r)\in\Omega$, and let
	\[
	E_\rho=\{(i,h)\in \cN\times \cM:
	c_{ih}\le\rho w_i\},
	\]
	and consider the following restricted assignment $\LP$ (depending on a given value $\beta>0$), which assigns a (fractional) variable to each {\em eligible pair} in $E_\rho$.
	\[
	\begin{array}{ll}
		\displaystyle
		\sum_{i:(i,h)\in E_\rho}x_{ih}=1,
		& \forall h\in \cM,\\[2mm]
		\displaystyle
		\sum_{h:(i,h)\in E_\rho}c_{ih} x_{ih}\le \beta w_i,
		& \forall i\in \cN,\\[2mm]
		x_{ih}\ge0,
		& \forall(i,h)\in E_\rho.
	\end{array}
	\]
	
	It is observed that there is at least one eligible recipient for every chore.  To see this, choose an agent $a$ of maximum weight, so $w_a=w_*$.
	By \cref{eq:packing}, every chore belongs to some bundle
	in row $a$'s witnessing partition and therefore $c_{ah}\le w_j\le w_* =w_a<\rho w_a.$ 
	Consequently $(a,h)\in E_\rho$ for every $h$.  This elementary
	observation ensures that the covering constraints themselves are not
	infeasible merely because some chore has no eligible recipient.

	We first show the feasibility of $\LP$. Rather than explicitly constructing a feasible fractional assignment $\bx=(x_{ih})$, we establish its existence through linear programming duality. Specifically, by Farkas' lemma, the fractional assignment LP is feasible if and only if no dual certificate of infeasibility exists. Consequently, it suffices to show that every admissible dual solution satisfies some inequality, as formalized in the following lemma.
	
	\begin{lemma}[Dual feasibility criterion]\label{lem:dual}
		For a given $\beta\ge 0$, the $\LP$ is feasible if and only if every choice $\lambda_i\ge0$ and $p_h\ge0$ satisfying
		\begin{equation}\label{eq:dual}
			p_h\le\lambda_i c_{ih},\qquad\text{for }(i,h)\in E_\rho,
		\end{equation}
		obeys
		\begin{equation}\label{eq:dual-goal}
			P\le \beta W,\qquad \text{where}\qquad P=\sum_{h} p_h,\qquad W=\sum_{i}\lambda_iw_i.
		\end{equation}
	\end{lemma}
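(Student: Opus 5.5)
The plan is a direct application of Farkas' lemma to the system defining $\LP$, written in the standard form $\exists x\ge 0$ with $Ax=b$ for the covering rows and $Cx\le d$ for the load rows. The relevant variant of Farkas' lemma says the following. The system is infeasible if and only if there are multipliers $y_h$ (free in sign) for the equalities and $\lambda_i\ge 0$ for the inequalities with two properties. First, $-y_h+\lambda_i c_{ih}\ge 0$ for every $(i,h)\in E_\rho$. Second, $-\sum_h y_h+\beta\sum_i\lambda_i w_i<0$.

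Here the coefficient of $x_{ih}$ is $1$ in row $h$ of the covering constraints and $c_{ih}$ in row $i$ of the load constraints. The right-hand sides are $1$ and $\beta w_i$ respectively. Setting $p_h:=y_h$, the certificate reads $p_h\le\lambda_i c_{ih}$ on $E_\rho$ and $P>\beta W$. So the $\LP$ is feasible exactly when no such certificate exists.

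The only discrepancy with the statement of \cref{lem:dual} is that Farkas allows $p_h$ of either sign, while the lemma quantifies over $p_h\ge 0$. I will close this gap with a truncation step. Given any certificate $(p,\lambda)$, replace $p_h$ by $p_h^+=\max(p_h,0)$. The constraint $p_h^+\le\lambda_i c_{ih}$ still holds, because $\lambda_i c_{ih}\ge 0$. The sum $P$ can only increase, so $P^+\ge P>\beta W$ and the strict violation is preserved. Hence infeasibility yields a nonnegative certificate. Conversely, a nonnegative certificate is in particular a Farkas certificate.

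For completeness I will also verify the easy direction directly, without invoking Farkas. Suppose $x$ is feasible and $(p,\lambda)$ satisfy \cref{eq:dual}. Then
\[
P=\sum_h p_h\sum_{i:(i,h)\in E_\rho}x_{ih}\le\sum_{(i,h)\in E_\rho}\lambda_i c_{ih}x_{ih}\le\sum_i\lambda_i\beta w_i=\beta W .
\]
The first equality uses the covering constraints, the first inequality uses \cref{eq:dual} together with $x\ge0$, and the last uses the load constraints with $\lambda\ge 0$.

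The main care point is getting the sign conventions of the Farkas variant right: the equalities carry free multipliers and the $\le$ rows carry nonnegative ones. The truncation argument is needed to match the lemma's restriction $p_h\ge 0$. Nothing about the packing promise or about $\rho$ is used in this lemma. The earlier observation that every chore has an eligible edge is not needed either, since an infeasible instance would simply be detected by a certificate, for example with $\lambda=0$ and $p_h>0$ for a chore that has no eligible edge.
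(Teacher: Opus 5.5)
Your proof is correct, but for sufficiency it takes a different route from the paper. The necessity direction and the truncation of negative prices match the paper's argument. The paper does not apply Farkas' lemma to the fixed-$\beta$ system. Instead it introduces an auxiliary LP that minimizes a load factor $z$, subject to coverage and $\sum_h c_{ih}x_{ih}\le z w_i$. It argues this LP is feasible with a finite optimum $z^*$, using the earlier observation that every chore has an eligible edge to a maximum-weight agent. Strong duality then identifies $z^*$ with $\max\{P : W\le 1\}$ over pairs satisfying \eqref{eq:dual}. So the hypothesis $P\le\beta W$, together with $\beta\ge 0$, yields $z^*\le\beta$, and an optimal solution of the auxiliary LP is feasible for $\LP$.

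Your direct Farkas argument is slightly more self-contained. It never needs an auxiliary program to be feasible, so it uses neither the eligible-edge observation nor $\beta\ge0$. It also handles a chore with no eligible edge uniformly, via the certificate $\lambda=0$, $p_h>0$.

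What the paper's route buys is the interpretation of the smallest feasible budget $z^*$ as the dual optimum $\max\{P: W\le1\}$. That is the quantity the multiscale analysis later bounds. For the lemma as stated, the two arguments are interchangeable, since Farkas' lemma and LP strong duality are equivalent tools here.
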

	\begin{proof}
		\emph{Necessity.} Suppose $\LP$ has a feasible solution $x$.  Multiply
		$p_h\le\lambda_i c_{ih}$ by $x_{ih}\ge0$, and sum over eligible
		edges:
		\begin{align*}
			\sum_h p_h
			=\sum_h p_h\sum_{i:(i,h)\in E_\rho}x_{ih}
			\le \sum_i\lambda_i
			\sum_{h:(i,h)\in E_\rho}c_{ih}x_{ih}
			\le\sum_i\lambda_i \beta w_i=\beta W.
		\end{align*}
		The first equality uses coverage, while the last inequality uses the
		load constraints and the nonnegativity of the multipliers.
		
		\emph{Sufficiency.} Consider the following always-feasible auxiliary
		linear program, with a nonnegative load parameter $z$:
		\[
		\begin{array}{ll}
			\text{minimize} &z\\
			\text{subject to}&\displaystyle\sum_{i:(i,h)\in E_\rho}x_{ih}=1
			\quad(h\in\cM),\\
			&\displaystyle\sum_hc_{ih}x_{ih}\le zw_i\quad(i\in\cN),\\
			&\bx\ge0,\quad z\ge0.
		\end{array}
		\]
		It is feasible because each chore has an eligible edge, and sufficiently
		large $z$ covers the resulting finite loads.  Its optimum $z^*$ is
		finite.  The dual can be written as
		\[
		\begin{array}{ll}
			\text{maximize}&\displaystyle\sum_hp_h\\
			\text{subject to}&p_h\le\lambda_i c_{ih},\qquad ~~(i,h)\in E_\rho,\\
			&\displaystyle\sum_i\lambda_iw_i\le1,
			\qquad\lambda_i\ge0,
		\end{array}
		\]
		where initially the equality-constraint multipliers $p_h$ are
		unrestricted in sign.  Because all costs and $\lambda_i$ are
		nonnegative, replacing a negative $p_h$ by zero preserves the dual
		constraints and cannot decrease its objective.  We may therefore
		restrict attention to $p_h\ge0$.
		
		If every feasible price/multiplier pair satisfies $P\le \beta W$, every
		dual solution with $W\le1$ has $P\le \beta$.  Linear-programming strong
		duality implies $z^*\le \beta$.  An optimal solution at $z^*$ is then
		feasible to $\LP$ with the prescribed budget $\beta$.
		The argument also covers the degenerate cases $P=W=0$ and zero
		multipliers.
	\end{proof}

	Fix an arbitrary feasible dual pair $(\lambda=(\lambda_i)_i,p=(p_h)_h)$ satisfying (\ref{eq:dual}). By Lemma~\ref{lem:dual}, to establish the feasibility of the fractional assignment $\LP$, we aim to derive an upper bound on the total dual price $P=\sum_h p_h$ in terms of the weighted sum $W=\sum_i\lambda_iw_i$ of the dual multipliers. Rather than analyzing the individual prices directly, we examine their distribution above different thresholds. The motivation is that chores with sufficiently large prices, after appropriate capping, can only be accommodated by bins with sufficiently large capacities in the WMMS witnessing partitions. By expressing the total capped prices through upper-tail functions (Lemma~\ref{d37:lemma-tail}), we can translate these packing restrictions into inequalities involving the dual variables (Lemma~\ref{lem:allscale}). These inequalities will subsequently be integrated over a continuum of scales to establish the desired bound $P\le \beta W$ (Theorem~\ref{thm:fractional}), thereby proving the feasibility of the fractional assignment $\LP$.
	
	\subsubsection{Represent the dual prices by their upper tails}
	\begin{definition}[Upper-tail functions]\label{def:up-tail-func}
		For $z\ge0$, define $D(z)=|\{h:p_h>z\}|$, and $E(z)=\sum_{h}(p_h-z)^+$, where $x^+=\max\{x,0\}$, respectively the number of chores whose prices exceed the threshold $z$ and the total excess of these prices above $z$.
	\end{definition}
	In particular,
	\begin{equation}\label{d37:E-boundary}
		E(0)=P,\qquad E(z)=0\quad\text{if }z\ge p_{\max}:=\max_{h\in\mathcal M}p_h.
	\end{equation}
	
	\begin{lemma}[Upper-tail identities]\label{d37:lemma-tail}
		For all $x,y\ge0$, with oriented integration if $x>y$,
		\begin{equation}\label{d37:tail-integral}
			E(x)-E(y)=\int_x^y D(z)\,\mathrm dz.
		\end{equation}
		In particular, $E(z)=\int_z^\infty D(u)\,\mathrm du$.
		For $0\le b<c$ one also has
		\begin{equation}\label{d37:capped-identity}
			\sum_{h:p_h>b}\min\{p_h,c\}
			=E(b)-E(c)+bD(b).
		\end{equation}
	\end{lemma}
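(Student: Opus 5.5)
The identities are elementary, and I would prove them one chore at a time and then sum. The key observation is that, for a single price $p\ge0$,
\[
(p-x)^+=\int_x^\infty \mathbf 1[p>z]\,\mathrm dz \qquad (x\ge 0).
\]
This holds because the integrand equals $1$ exactly on the interval $[x,p)$ when $p>x$, and is identically zero otherwise. Summing over the finitely many chores $h\in\cM$ and exchanging the finite sum with the integral gives
\[
E(x)=\int_x^\infty D(z)\,\mathrm dz.
\]
The integral is really over the bounded range $[x,p_{\max}]$, since $D(z)=0$ for $z\ge p_{\max}$, so convergence is not an issue. This proves the final claim $E(z)=\int_z^\infty D(u)\,\mathrm du$.

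Subtracting the formulas at $x$ and at $y$ then yields
\[
E(x)-E(y)=\int_x^y D(z)\,\mathrm dz .
\]
For $x\le y$ this is an ordinary integral. For $x>y$ it follows from the convention $\int_x^y=-\int_y^x$, which is exactly what ``oriented integration'' means in the statement. This establishes \eqref{d37:tail-integral}.

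For the capped identity \eqref{d37:capped-identity}, I fix $0\le b<c$ and verify the pointwise identity
\[
\mathbf 1[p>b]\,\min\{p,c\}=(p-b)^+-(p-c)^+ + b\,\mathbf 1[p>b]
\]
for each price $p$. There are three cases.
\begin{itemize}
\item If $p\le b$, both sides are $0$, since then also $p<c$.
\item If $b<p\le c$, the left side is $p$, and the right side is $(p-b)-0+b=p$.
\item If $p>c$, the left side is $c$, and the right side is $(p-b)-(p-c)+b=c$.
\end{itemize}
Summing over $h$ gives $\sum_{h:p_h>b}\min\{p_h,c\}=E(b)-E(c)+bD(b)$. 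Alternatively, one can note that $\min\{p,c\}-b=\int_b^c\mathbf 1[p>z]\,\mathrm dz$ when $p>b$, and then apply the first identity.

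There is no real obstacle here. The only points needing care are the strictness conventions in $D(z)=|\{h:p_h>z\}|$, which make $D$ right-continuous and do not affect the integrals because $D$ changes value at only finitely many points, and the sign convention when $x>y$. I would also record \eqref{d37:E-boundary} as an immediate consequence of the definition, since it is used later.
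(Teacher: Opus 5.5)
Your proposal is correct and essentially the paper's own argument: integrate the per-chore indicator $\mathbf 1_{\{p_h>z\}}$ to obtain positive parts, sum over the finitely many chores, and verify the capped identity by the same case check on $p_h$ relative to $b$ and $c$. The only difference is the order of steps: you derive $E(x)=\int_x^\infty D(z)\,\mathrm dz$ first and subtract, whereas the paper integrates over $[x,y]$ first and then takes $y\ge p_{\max}$.
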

	\begin{proof}
		For a fixed chore $h$, the indicator $\mathbf1_{\{p_h>z\}}$
		is equal to $1$ on $[0,p_h)$ and $0$ beyond $p_h$.  Hence
		\[
		\int_x^y\mathbf1_{\{p_h>z\}}\,\mathrm dz
		=(p_h-x)_+-(p_h-y)_+.
		\]
		Summing over the finitely many chores proves \eqref{d37:tail-integral}.
		The representation $E(z)=\int_z^\infty D(u)\,\mathrm du$
		follows by taking $y\ge p_{\max}$.
		
		To prove \eqref{d37:capped-identity}, consider a chore with $p_h>b$.
		If $p_h\le c$, then
		\[
		(p_h-b)_+-(p_h-c)_++b=(p_h-b)+b=p_h.
		\]
		If $p_h>c$, then the same expression equals
		$(p_h-b)-(p_h-c)+b=c$.  Thus in either case
		\[
		\min\{p_h,c\}=(p_h-b)_+-(p_h-c)_++b.
		\]
		For $p_h\le b$ both the first two positive-part terms vanish, so
		summing the displayed equality only over $p_h>b$ yields
		$E(b)-E(c)+bD(b)$.
	\end{proof}
	
	\subsubsection{Threshold profiles of dual multipliers and weights}
	For $t>0$, define
	\begin{equation}\label{eq:profiles}
		\omega(t)=\max\{w_i:\lambda_i\le t\},
		\qquad
		R(t)=\sum_{i:\lambda_i>t}w_i,
	\end{equation}
	where the maximum of an empty set is defined to be zero.
	In other words, $\omega(t)$ denotes the largest weight
	among agents whose dual multipliers are at most $t$,
	while $R(t)$ represents the total weight of agents
	whose dual multipliers exceed $t$.
	Clearly, $\omega(t)$ is nondecreasing and $R(t)$
	is nonincreasing in $t$.
	Moreover, these functions satisfy
	\begin{equation}\label{eq:profile-facts}
		\sum_{i:w_i>\omega(t)}w_i\le R(t)
		\quad\text{and}\quad
		\int_0^\infty R(t)\,\dd t
		=\sum_i\lambda_iw_i=W.
	\end{equation}
	Indeed, $w_j>\omega(t)$ implies $\lambda_j>t$; the integral identity follows by integrating the indicators $\mathbf1_{\{t<\lambda_i\}}$.
	
	\begin{lemma}[Capped prices admit integral packings]\label{lem:capped}
		For any $u>0$ such that $\omega(u)>0$, the capped sizes of the indivisible chores $z_h(u)=\min\{p_h,\rho u\omega(u)\}$ 
		can be packed into bins of capacities $uw_1,\ldots,uw_n$.
	\end{lemma}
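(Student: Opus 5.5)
Fix $u>0$ with $\omega(u)>0$, and let $a$ be an agent attaining the maximum in the definition of $\omega(u)$. Then $\lambda_a\le u$ and $w_a=\omega(u)>0$. The plan is to reuse the row-$a$ packing $(B^a_j)_{j\in\NN}$ from Lemma~\ref{lem:normalization}. It satisfies $c_a(B^a_j)\le w_j$ for every $j$. We place each chore $h$ into bin $j$ exactly when $h\in B^a_j$. It then suffices to show the pointwise bound
\[
z_h(u)\le u\,c_{ah}\qquad\text{for every chore }h,
\]
because summing over $h\in B^a_j$ gives $\sum_{h\in B^a_j}z_h(u)\le u\,c_a(B^a_j)\le u w_j$, which is the required capacity.

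We prove the pointwise bound by splitting on whether the edge $(a,h)$ is eligible.

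\emph{Case 1: $(a,h)\in E_\rho$.} The dual constraint \eqref{eq:dual} gives $p_h\le\lambda_a c_{ah}\le u\,c_{ah}$, using $\lambda_a\le u$ and $c_{ah}\ge0$. Since $z_h(u)\le p_h$, the bound follows.

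\emph{Case 2: $(a,h)\notin E_\rho$.} Here $c_{ah}>\rho w_a=\rho\,\omega(u)$. Therefore $u\,c_{ah}>\rho u\,\omega(u)\ge z_h(u)$, where the last inequality holds because $z_h(u)$ is capped at $\rho u\,\omega(u)$. Note that $p_h$ is uncontrolled in this case, and this is exactly why the cap is needed.

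I do not expect a real obstacle. The only delicate point is the choice of row: we must take the row of an agent with $\lambda_a\le u$ and maximum weight $\omega(u)$. This single choice makes both cases work. Small multipliers handle the eligible edges, and the large weight makes the threshold $\rho w_a$ match the cap $\rho u\,\omega(u)$ on the ineligible edges. The hypothesis $\omega(u)>0$ only ensures that such an agent $a$ exists.
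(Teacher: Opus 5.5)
Your proposal is correct and matches the paper's proof essentially step for step. Like the paper, you pick an agent $a$ with $\lambda_a\le u$ and $w_a=\omega(u)$, establish $z_h(u)\le u\,c_{ah}$ for every chore by splitting on whether $(a,h)\in E_\rho$, and then reuse the row-$a$ packing $(B^a_j)_{j\in\NN}$ bundle by bundle.
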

	\begin{proof}
		Choose agent $a$ with $w_a=\omega(u)$ and $\lambda_a\le u$. If $(a,h)$ is eligible, then $p_h\le\lambda_a c_{ah}\le uc_{ah}$. Otherwise, $uc_{ah}>\rho u w_a$. Thus $z_h(u)\le uc_{ah}$ for all chores, including those not eligible
		for agent $a$.  Take a complete integral partition witnessing
		\cref{eq:packing} for row-$a$ partition.  In its $j$th bundle,
		\[
		\sum_{h\in B_j^a}z_h(u)
		\le u\sum_{h\in B_j^a}c_{ah}
		\le uw_j.
		\]
		The \emph{same} partition therefore packs all capped prices
		integrally.
		~\end{proof}
	
	\subsubsection{Turn the packing into a pointwise price inequality}
	For every scale $q>0$, denote
	\begin{equation}\label{eq:corrections}
		F_q(t)=E(q t\omega(t)),\qquad
		N_q(t)=q t\omega(t)D(q t\omega(t)),\qquad
		J(q)=\int_0^\infty N_q(t)\frac{\dd t}{t}.
	\end{equation}
	Note that $J(q)$ is finite and nonnegative. It aggregates the threshold contributions of expensive chores over all dual-multiplier thresholds $t>0$, for a fixed scale $q$. In addition, near zero, $N_q(t)\le qmw_*t$, and for sufficiently large $t$, $\omega(t)=w_*$ and $N_q(t)=0$. The estimates are uniform on compact positive scale intervals. Also, $F_q(t)\to P$ as $t\downarrow0$, and $F_q(t)=0$ eventually. Thus compact scale integrations below are justified by nonnegativity or absolute integrability.
	
	\begin{lemma}[All-scale inequality]\label{lem:allscale}
		For every $s\ge1$,
		\begin{equation}\label{eq:allscale}
			P\ln s+J(s)-\int_\rho^s\frac{J(q)}q\dd q\le sW.
		\end{equation}
		The integral has its usual orientation when $s<\rho$.
	\end{lemma}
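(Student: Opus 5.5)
The plan is to turn \cref{lem:capped} into a \emph{pointwise} inequality in the threshold $t$, integrate it against $\dd t/t$, and convert the result into the statement using the tail calculus of \cref{d37:lemma-tail}. The one step I have not pinned down is how to obtain the factor $s$ on the right-hand side for general $s$ (see the last paragraph).

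\textbf{Pointwise inequality.} Fix $t>0$ with $\omega(t)>0$ and apply \cref{lem:capped} with $u=t$. Bins of agents with $w_j\le\omega(t)$ have capacity at most $t\omega(t)$. Take $q\in[1,\rho]$. Any chore with $p_h>qt\omega(t)$ has capped size
\[
z_h(t)=\min\{p_h,\rho t\omega(t)\}>t\omega(t).
\]
It therefore cannot sit in a small bin, so it must lie in a bin of an agent with $w_j>\omega(t)$. By \cref{eq:profile-facts}, those bins have total capacity at most $tR(t)$. Summing the capped sizes of these chores and using \eqref{d37:capped-identity} with $b=qt\omega(t)$ and $c=\rho t\omega(t)$ gives
\[
F_q(t)-F_\rho(t)+N_q(t)\le tR(t).
\]
A chore with $p_h>\rho t\omega(t)$ is capped at exactly $\rho t\omega(t)$, so the same bins also give $N_q(t)\le tR(t)$ for $q\ge\rho$.

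\textbf{Integration over $t$.} Divide by $t$ and integrate over $(0,\infty)$. By \cref{eq:profile-facts} the right-hand side becomes $W$, and by definition $N_q$ integrates to $J(q)$. For the $F$-terms, \eqref{d37:tail-integral} gives
\[
\partial_qF_q(t)=-t\omega(t)D(qt\omega(t))=-N_q(t)/q .
\]
Hence, by Tonelli,
\[
\int_0^\infty\bigl(F_q(t)-F_\rho(t)\bigr)\frac{\dd t}{t}=\int_q^\rho \frac{J(q')}{q'}\,\dd q'.
\]
At $q=1$ this yields $J(1)-\int_\rho^1J(q)\,\dd q/q\le W$, which is the case $s=1$ of \eqref{eq:allscale}.

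\textbf{General $s$.} The $P\ln s$ term should come from a Frullani-type identity. Since $F_1(t)\to P$ as $t\downarrow0$ and $F_1(t)$ vanishes eventually,
\[
\int_0^\infty\bigl(F_1(t)-F_1(st)\bigr)\frac{\dd t}{t}=P\ln s .
\]
Because $\omega$ is nondecreasing and $E$ is nonincreasing, $F_1(st)\le F_s(t)$. I would therefore apply \cref{lem:capped} at the rescaled multiplier $u=st$, so that bins have capacity $stw_j$ and the right-hand side becomes $stR(st)$. The change of variables $t\mapsto st$ then turns $\int stR(st)\,\dd t/t$ into $sW$. The $F$- and $N$-terms are reorganised via $\partial_qF_q=-N_q/q$ into $J(s)$ and $\int_\rho^s J(q)\,\dd q/q$. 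An equivalent route is to show that the derivative in $s$ of the left-hand side of \eqref{eq:allscale} is at most $W$, i.e.\ $P/s+J'(s)-J(s)/s\le W$, and integrate this from the base case $s=1$.

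\textbf{Main obstacle.} I expect this last step to be the hard one. The packing at scale $u=st$ uses the profile $\omega(st)$, whereas $F_s$, $N_s$ and $J(s)$ are defined with $\omega(t)$. The two mismatches must be shown to go in the favourable direction, with the leftover terms being exactly what $P\ln s$ absorbs. I would first try to split chores according to whether $p_h$ exceeds $st\omega(t)$ or $st\omega(st)$, and charge each part either to the large bins, of total capacity $stR(st)$, or to the Frullani term.
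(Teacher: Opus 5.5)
\textbf{There is a genuine gap for $s>1$.} Your $s=1$ case is correct, but the route you sketch for general $s$ does not close, and you say as much. Applying Lemma~\ref{lem:capped} at $u=st$ and counting the bins with $w_j>\omega(st)$ only forces the chores with $p_h>st\omega(st)$ into those bins. That reproduces your $q=1$ inequality evaluated at the point $st$, namely $F_1(st)-F_\rho(st)+N_1(st)\le stR(st)$. Integrated against $\dd t/t$, it gives nothing beyond the $s=1$ case. Your change of variables is also wrong: $\int_0^\infty stR(st)\dd t/t=\int_0^\infty R(v)\dd v=W$, not $sW$. So no splitting of chores between $st\omega(t)$ and $st\omega(st)$ with charges to capacity $stR(st)$ can produce the factor $s$. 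The derivative route would need the pointwise bound $P/s+J'(s)-J(s)/s\le W$. That is strictly stronger than what is required, and you give no argument for it.

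\textbf{The missing idea is to use the two scales on different sides of the capped identity.}
\begin{itemize}
\item Pack at multiplier level $u=st$, so the cap is $\rho st\omega(st)$.
\item Select the chores with $p_h>st\omega(t)$, i.e.\ threshold with the profile at $t$, not at $st$. If $\omega(st)=0$, both sides are trivial.
\item Because $\omega(st)\ge\omega(t)$ and $\rho>1$, the cap strictly exceeds $st\omega(t)$. Each selected chore therefore has capped size $>st\omega(t)$ and cannot fit in a bin with $w_j\le\omega(t)$, whose capacity is $stw_j\le st\omega(t)$.
\item So the selected chores lie in bins with $w_j>\omega(t)$. By \eqref{eq:profile-facts}, these have total capacity $st\sum_{w_j>\omega(t)}w_j\le stR(t)$; note $R(t)$, not $R(st)$.
\item Then \eqref{d37:capped-identity} with $b=st\omega(t)$ and $c=\rho st\omega(st)$ gives
\[
F_s(t)-F_\rho(st)+N_s(t)\le stR(t).
\]
\end{itemize}
Now $\int_0^\infty stR(t)\dd t/t=sW$. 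The scale mismatch you worried about is absorbed exactly, with no splitting. Write $F_s(t)-F_\rho(st)=\bigl(F_s(t)-F_\rho(t)\bigr)+\bigl(F_\rho(t)-F_\rho(st)\bigr)$. The second bracket integrates to $P\ln s$ by your Frullani argument, applied to $F_\rho$ rather than $F_1$. The first bracket gives $-\int_\rho^s J(q)\dd q/q$ by your formula $\partial_qF_q=-N_q/q$. This is the paper's argument.

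\textbf{A smaller, unused error.} Your side claim $N_q(t)\le tR(t)$ for $q\ge\rho$ is unjustified. Those chores are capped at $\rho t\omega(t)$, so the packing only yields $(\rho/q)N_q(t)\le tR(t)$.
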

	\begin{proof}
		We first show the pointwise inequality
		\begin{equation}\label{eq:pointwise}
			F_s(t)-F_\rho(st)+N_s(t)\le stR(t).
		\end{equation}
		Put $u=st$. If $\omega(u)=0$, both relevant profiles are zero and the left side is zero. Otherwise use \cref{lem:capped} at $u$. Every item with $p_h>u\omega(t)$ has capped size strictly greater than $u\omega(t)$, because $\rho u\omega(u)>u\omega(t).$ 
		Here $\omega(u)\ge\omega(t)$, $s\ge1$, and $\rho>1$. Therefore these items occupy only bins with $w_j>\omega(t)$. Their total available capacity is at most $uR(t)$ by \cref{eq:profile-facts}. The sum of their capped sizes equals
		\begin{align*}
			\sum_{h:p_h>u\omega(t)}\min\{p_h,\rho u\omega(u)\}
			&=E(u\omega(t))-E(\rho u\omega(u))
			+u\omega(t)D(u\omega(t))\\
			&=F_s(t)-F_\rho(st)+N_s(t).
		\end{align*}
		This also handles $\omega(t)=0<\omega(u)$: every selected positive item uses a positive-capacity bin.
		
		Integrate \cref{eq:pointwise} against $\dd t/t$. Logarithmic telescoping gives
		\begin{equation}\label{eq:log-telescoping}
			\int_0^\infty\bigl(F_\rho(t)-F_\rho(st)\bigr)\frac{\dd t}{t}=P\ln s.
		\end{equation}
		For $s>1$, integration from $\varepsilon$ to $T$ and substitution in the second term give the integral of $F_\rho(t)/t$ over $[\varepsilon,s\varepsilon]$ minus its integral over $[T,sT]$. The endpoint values give the identity. The case $s=1$ is immediate.
		
		Furthermore, $E(x)-E(y)=\int_x^yD(z)\dd z$ implies
		\begin{equation}\label{eq:scale-exchange}
			\int_0^\infty\bigl(F_s(t)-F_\rho(t)\bigr)\frac{\dd t}{t}
			=-\int_\rho^s\frac{J(q)}q\dd q.
		\end{equation}
		Use $z=qt\omega(t)$ and interchange the compact scale integral with the $t$-integral. Combining \cref{eq:log-telescoping,eq:scale-exchange}, the definition of $J(s)$, and \cref{eq:profile-facts} proves \cref{eq:allscale}.
	\end{proof}
	
	\subsubsection{Cancellation of the correction terms}
	\begin{theorem}[Multiscale fractional bound]\label{thm:fractional}
		The assignment system $\LP$ is feasible with
		\begin{equation}\label{eq:fractional-bound}
			\beta=\beta(\rho,r)=\frac{1+\ln(r/\rho)}{(1+\ln\rho)/\rho-1/r}.
		\end{equation}
	\end{theorem}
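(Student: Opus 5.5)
The plan is to combine the family of all-scale inequalities of \cref{lem:allscale}, taken for $s\in[\rho,r]$, with the weight $1/s^2$, so that the correction terms involving $J$ telescope. Then add one extra copy of the inequality at $s=r$ to absorb the boundary term that remains. Put $G(s)=\int_\rho^s J(q)\,q^{-1}\dd q$, which is nonnegative for $s\ge\rho$ because $J\ge0$. Then \cref{eq:allscale} reads $P\ln s+J(s)-G(s)\le sW$. Formally $J(s)-G(s)=sG'(s)-G(s)=s^2\,(G(s)/s)'$, so dividing by $s^2$ turns the correction into an exact derivative. To avoid differentiability questions about $J$, I will do this with Fubini instead of derivatives.

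\textbf{Step 1: integrate against $\dd s/s^2$ over $[\rho,r]$.} Integrating \cref{eq:allscale} against $\dd s/s^2$ gives three terms. The price term is
\[
P\int_\rho^r\frac{\ln s}{s^2}\dd s=P\Bigl(\frac{1+\ln\rho}{\rho}-\frac{1+\ln r}{r}\Bigr).
\]
The right-hand side is $W\ln(r/\rho)$. For the correction terms, Tonelli (everything is nonnegative) gives
\[
\int_\rho^r\frac{G(s)}{s^2}\dd s=\int_\rho^r\frac{J(q)}{q}\Bigl(\frac1q-\frac1r\Bigr)\dd q .
\]
Hence
\[
\int_\rho^r\frac{J(s)-G(s)}{s^2}\dd s=\frac{G(r)}{r}.
\]
Finiteness of these integrals follows from the uniform bounds on $N_q$ over the compact interval $[\rho,r]$ noted after \cref{eq:corrections}. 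The result is
\[
P\Bigl(\frac{1+\ln\rho}{\rho}-\frac{1+\ln r}{r}\Bigr)+\frac{G(r)}{r}\le W\ln\frac r\rho .
\]

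\textbf{Step 2: add the endpoint inequality.} Apply \cref{lem:allscale} at $s=r$ and divide by $r$:
\[
\frac{P\ln r}{r}+\frac{J(r)}{r}-\frac{G(r)}{r}\le W.
\]
Adding this to the Step 1 inequality cancels $G(r)/r$ and the $\ln r/r$ terms. Dropping $J(r)/r\ge0$ leaves
\[
P\Bigl(\frac{1+\ln\rho}{\rho}-\frac1r\Bigr)\le W\Bigl(1+\ln\frac r\rho\Bigr).
\]

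\textbf{Step 3: conclude.} The coefficient of $P$ is positive, since $(1+\ln\rho)/\rho>1/\rho>1/r$ for $r>\rho>1$. Dividing by it gives $P\le\beta(\rho,r)W$ for every dual pair satisfying \cref{eq:dual}. By \cref{lem:dual}, the $\LP$ is therefore feasible with $\beta=\beta(\rho,r)$.

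The main obstacle is justifying the analytic manipulations: exchanging the order of integration, and the finiteness of $J$ and $G$ on $[\rho,r]$. This rests on the integrability remarks after \cref{eq:corrections}. Beyond that, the key idea is simply choosing the weight $1/s^2$, which makes $J(s)-G(s)$ telescope, together with the extra endpoint copy at $s=r$.
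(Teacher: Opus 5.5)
Your proposal is correct and follows the paper's proof essentially step for step. You integrate the all-scale inequality against $\mathrm{d}s/s^2$ over $[\rho,r]$, add the copy at $s=r$ divided by $r$ so that $Q(r)/r$ (your $G(r)/r$) cancels, discard $J(r)/r\ge 0$, and apply the dual feasibility criterion; the only difference is that you evaluate $\int_\rho^r G(s)s^{-2}\,\mathrm{d}s$ by Tonelli rather than as the exact derivative $(Q(s)/s)'$, which sidesteps the paper's almost-everywhere differentiability remark.
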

	\begin{proof}
		Let
		\[
		Q(s)=\int_\rho^s\frac{J(q)}q\dd q.
		\]
		Then $Q(\rho)=0$ and $Q'(s)=J(s)/s$ almost everywhere. Divide \cref{eq:allscale} by $s^2$, integrate from $\rho$ to $r$, and add \cref{eq:allscale} at $s=r$, divided by $r$. The correction integral is
		\[
		\int_\rho^r\frac{J(s)-Q(s)}{s^2}\dd s
		=\int_\rho^r\left(\frac{Q(s)}s\right)'\dd s
		=\frac{Q(r)}r.
		\]
		Adding the endpoint cancels $Q(r)$ and leaves $J(r)/r$. Also,
		\[
		\int_\rho^r\frac{\ln s}{s^2}\dd s+\frac{\ln r}{r}
		=\frac{1+\ln\rho}{\rho}-\frac1r,
		\qquad
		\int_\rho^r\frac{\dd s}s+1=1+\ln(r/\rho).
		\]
		Consequently,
		\begin{equation}\label{eq:master-upper}
			P\left(\frac{1+\ln\rho}{\rho}-\frac1r\right)
			+\frac{J(r)}r
			\le W\left(1+\ln\frac r\rho\right).
		\end{equation}
		The coefficient of $P$ is positive because $r>\rho>1$. Discard the nonnegative endpoint term and apply \cref{lem:dual} with $\alpha=\alpha(\rho,r)$ given by \cref{eq:fractional-bound}. Hence, the assignment system $\LP$ is feasible and the proof is completed.
	\end{proof}
	
	\subsection{Computing an $(\alpha+\epsilon)$-WMMS allocation}
	We make use of the standard matching argument, which is 
	included here to make the additive error precise.
	
	\begin{lemma}[Additive assignment rounding~\cite{ST93}]\label{lem:rounding}
		Suppose a fractional assignment $\bx$ covers each chore exactly once,
		the fractional cost of agent $i$ is at most $L_i$, and every edge with
		$x_{ih}>0$ has cost at most $b_i$.  One can compute in polynomial time an integral
		assignment using only supported edges with the (integral) cost of agent $i$ is at most
		$L_i+b_i$, simultaneously for all agents. 
	\end{lemma}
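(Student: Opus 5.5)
The plan is to follow the classical Shmoys--Tardos slot construction, which Algorithm~\ref{alg:wmms} already sketches in Phase~IV. For each agent $i$, we sort the chores $h$ with $x_{ih}>0$ in nonincreasing order of $c_{ih}$. We then open $q_i=\lceil\sum_h x_{ih}\rceil$ unit-capacity slots $s_{i1},\dots,s_{iq_i}$ and pour the masses $x_{ih}$ into them consecutively in the sorted order. A chore may be split across two consecutive slots. The result is a fractional $b$-matching $y$ on the bipartite graph $\cH$ between chores and slots. Every chore receives total mass $\sum_i x_{ih}=1$, and every slot receives mass at most $1$; all slots except possibly the last slot of each agent are exactly full.

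The first step is integral rounding of $y$. The vector $y$ is a point of the bipartite matching polytope $\{y\ge0:\ \sum_{\text{slots}}y_{h,s}=1\ \forall h,\ \sum_h y_{h,s}\le1\ \forall s\}$. Because the constraint matrix of a bipartite graph is totally unimodular, this polytope has integral vertices. Hence there exists an integral matching covering every chore and using only edges in the support of $y$, and it can be computed in polynomial time, for example by a maximum bipartite matching computation or by moving to a vertex of the polytope. Support edges of $y$ come from support edges of $x$, so each chore is assigned to an agent $i$ with $x_{ih}>0$. This gives the claim about using only supported edges.

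The main step is the cost bound for a fixed agent $i$. Each slot $s_{ik}$ receives at most one chore. Let $\max_k$ and $\min_k$ denote the largest and smallest cost $c_{ih}$ among chores with positive mass in slot $s_{ik}$.
\begin{itemize}
\item The chore matched to slot $s_{i1}$ costs at most $\max_1\le b_i$.
\item For $k\ge2$, the chore matched to $s_{ik}$ costs at most $\max_k$, and by the sorted order $\max_k\le\min_{k-1}$.
\item Slot $s_{i,k-1}$ is completely full for $k\ge 2$, so $\min_{k-1}\le\sum_h c_{ih}\,y_{h,s_{i,k-1}}$.
\end{itemize}
Summing over $k$ gives an integral cost of at most
\[
b_i+\sum_{k=1}^{q_i-1}\sum_h c_{ih}\,y_{h,s_{ik}}\le b_i+\sum_h c_{ih}x_{ih}\le L_i+b_i .
\]
Agents with $q_i\in\{0,1\}$ are trivial: they receive at most one chore, of cost at most $b_i$.

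I expect the main obstacle to be bookkeeping rather than ideas. A chore straddling two slots contributes to the "full" mass of slot $k-1$ and also appears in slot $k$, and the monotonicity $\max_k\le\min_{k-1}$ must be checked carefully with ties in the sort order. The polynomial running time is immediate: there are at most $\sum_i q_i\le m+n$ slots, the graph has $O(nm)$ edges, and the filling, the graph construction and the matching are all polynomial-time.
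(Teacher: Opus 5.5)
Your proposal is correct and follows essentially the same route as the paper: the Shmoys--Tardos slot construction, integrality of the bipartite matching polytope (the paper phrases this as max-flow integrality) to get a chore-saturating matching on supported edges, and the charging of the chore in slot $k\ge2$ to the full preceding slot $k-1$, with the first-slot chore bounded by $b_i$. Your remarks on straddling chores and ties are correct, and they are details the paper leaves implicit.
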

	\begin{proof}
		Fix an agent $i$.  List the chores with $x_{ih}>0$ in
		nonincreasing order of cost $c_i(h)$.  Regard $x_{ih}$ as an item
		\emph{mass}, and lay these masses consecutively into slots of mass at
		most one.  If a mass crosses a slot boundary, split that fractional
		mass between the adjacent slots.  Every slot is full, except
		possibly the last.  Denote the slots of agent $i$ by
		$(i,1),\ldots,(i,q_i)$; an agent with no positive fractional mass
		has no slots.
		
		Construct a bipartite graph $\cH$ with chores on the left and all the slots
		on the right.  Join chore $h$ to slot $(i,k)$ precisely when a
		positive portion of its fractional mass was placed there.
		Each chore has total fractional mass one over its incident slots,
		and each slot has total incident fractional mass at most one.
		This is a fractional matching saturating every chore.  The
		integrality of bipartite matching (equivalently, a max-flow
		integrality theorem) therefore gives an \emph{integral} matching
		saturating every chore.  Assign a chore to the owner of its matched
		slot.  Every chore is assigned once, and only supported assignment
		edges are used.
		
		It remains to bound the load of one agent.  The cost of the item
		matched to her first slot, if any, is at most $b_i$.  Suppose a
		chore $h$ is matched to slot $(i,k)$ with $k\ge2$.  Because the
		fractional pieces were ordered by decreasing cost, \emph{every}
		positive fractional piece in the preceding slot $(i,k-1)$ has
		cost at least $c_{ih}$.  That preceding slot is full, of total
		mass one.  Its fractional cost is therefore at least $c_{ih}$.
		
		Different matched slots $k\ge2$ charge to different preceding
		slots $k-1$, and none of these charges exceeds the corresponding
		preceding slot's fractional cost.  Summing their charges is at most
		the total original fractional load of agent $i$, hence at most
		$L_i$.  Adding the possible first-slot item gives integral load
		at most $L_i+b_i$.  The argument holds simultaneously for every
		agent because the one matching assigns all items consistently.
		
		Finally, the slot graph has polynomial size, and bipartite matching is
		polynomial time. This completes the proof.
	\end{proof}

	Apply \cref{lem:rounding} with $L_i=\alpha(\rho,r)w_i$ and $b_i=\rho w_i$. Then
	\[
	c_i(A_i)\le\bigl(\beta(\rho,r)+\rho\bigr)w_i=\alpha(\rho,r) w_i.
	\]
	\Cref{lem:normalization} proves the existential part of \cref{thm:upper-main}. The main upper-bound theorem thus holds directly on the original matrix: no identical-ordering reduction or weight discretization is needed.

	For rational original input, use the related-machine scheduling scheme~\cite{HS88} separately for each row. Job $h$ has size $v_i(h)$ and machine $j$ has speed $w_j$. If the returned feasible makespan is $\widehat\tau_i$, then $\widehat\mu_i=w_i\widehat\tau_i$ satisfies
	\begin{equation}\label{eq:benchmark-estimate}
		\mu_i\le\widehat\mu_i\le(1+\delta)\mu_i
	\end{equation}
	for any fixed $\delta>0$. Normalize instead by $\widehat c_i(h)=w_iv_i(h)/\widehat\mu_i$. The returned schedules themselves witness \cref{eq:packing}. Consequently, \cref{thm:fractional} applies without knowing exact WMMS.
	
	Choose rational $\rho,r$ and a rational $\widehat \beta>\beta(\rho,r)$. Solve $\LP$ for $\widehat c$ with budget $\widehat \beta w_i$, then round. The original cost is at most
	\begin{equation}\label{eq:algorithm-factor}
		(\widehat \beta+\rho)\widehat\mu_i
		\le(\widehat \beta+\rho)(1+\delta)\mu_i.
	\end{equation}
	There are at most $nm$ assignment variables and $n+m$ main constraints. The WMMS approximation is polynomial time for fixed precision; solving the rational LP and finding the matching are polynomial-time operations. Nearby rational parameters and sufficiently small fixed errors prove the $(\astar+\varepsilon)$ assertion.
	
	For an explicit guarantee, take
	\begin{equation}\label{eq:algorithm-parameters}
		\rho=\frac{689}{625},\qquad r=\frac{15109}{5000},\qquad
		\widehat \beta=\frac{604367}{200000},\qquad \delta=10^{-6}.
	\end{equation}
	Exact logarithm enclosures establish $\alpha(\rho,r)<\widehat \alpha$. Moreover,
	\begin{equation}\label{eq:algorithm-exact}
		(\widehat \beta+\rho)(1+\delta)
		=\frac{824847824847}{200000000000}
		=4.124239124235<4.12424.
	\end{equation}
	This finishes the algorithmic part of \cref{thm:upper-main}. 
	
	\section{An existential lower bound of \texorpdfstring{$e$}{e}}\label{sec:lower}
	For a problem instance $I$ with positive weighted maximin shares, define
	\[
	\gamma(I)=\min_{(A_1,\ldots,A_n)}\max_{i\in \cN}\frac{v_i(A_i)}{\mu_i},
	\qquad \Gamma=\sup_I\gamma(I).
	\]
	Basically, $\Gamma$ is the optimal universal \emph{existence} factor, rather than an approximation ratio relative to the best allocation of a fixed instance.
	
	Our main result is as follows.
	\begin{theorem}[Existential lower bound]\label{thm:wmms-existence}
		For any fixed $\epsilon\in(0,e-1)$, there is an instance which does not admit an $(e-\epsilon)$-WMMS allocation, where $e\approx 2,71828$.
	\end{theorem}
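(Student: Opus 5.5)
The plan is to build an explicit family of instances $I_N$, show $\gamma(I_N)\ge e-o(1)$, and then choose $N$ large enough for the given $\epsilon$. One simplification drives everything. For a lower bound we only need \emph{upper} bounds on each $\mu_i$, and by the definition of $\WMMS_i$ any explicit integral partition of $\MM$ provides one. So for each row $i$ I will exhibit a row-$i$ packing as in \cref{eq:packing}, i.e.\ normalized costs $c_{ih}$ with an assignment of chores to bins satisfying $c_i(B^i_j)\le w_j$. Then $\mu_i\le$ the corresponding scale, and it suffices to show that every allocation gives some agent normalized cost $c_i(A_i)>(e-\epsilon)w_i$.

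\emph{Construction.} Take $m=n$ and one common ordering of the chores, chore $1$ being the most costly for everyone. Partition the agents into groups $G_1,\dots,G_N$ whose sizes grow geometrically, say $|G_k|\approx(1+1/N)^{k}$ times a large base, and give agents in later groups smaller weights. Sort the weights as $w_1\ge\dots\ge w_n$. Row $i$ uses nested costs: $c_{ih}=w_h$ for the chores $h$ that agent $i$ "sees" as heavy, namely the first $k(i)$ chores, where $k(i)$ depends on the group of $i$. The remaining chores get tiny positive costs $\eta$, which keeps all costs strictly positive and the ordering identical across rows. The integral packing certificate for row $i$ places chore $h$ in bin $h$. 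Heavy chores fit exactly, since $c_{ih}=w_h$. The tiny chores can be absorbed into slack deliberately left by shrinking the heavy costs to $(1-\eta')w_h$. Hence every row satisfies \cref{eq:packing} with $\mu_i$ normalized to at most $w_i$, and \cref{lem:normalization} transfers this back to the $v$-instance.

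\emph{Forcing argument.} Fix an allocation with $c_i(A_i)\le\alpha w_i$ for all $i$. A heavy chore $h$ costs $w_h$ to everyone who sees it. An agent of weight $w_i$ therefore cannot take a chore $h$ with $w_h>\alpha w_i$, and can take only a bounded total of such chores. I will process the groups from the heaviest chores downward and track $S_k$, the number of heavy chores that the first $k$ groups are forced to absorb. Because $m=n$ and the chores are nested, a Hall-type counting shows that the chores seen by group $k$ but not yet absorbed must be charged against group $k$'s capacity $\alpha\sum_{i\in G_k}w_i$. This should give a recursion of the form $S_{k+1}\ge S_k(1+1/N)-(\text{capacity term})/\alpha$. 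Unrolling it, and choosing the weights so that the capacity of each group is a $1/N$ fraction of a common normalization, yields a condition like $\alpha\ge(1+1/N)^N\cdot(1-o(1))$, i.e.\ $\alpha\to e$. Equivalently, in the continuum limit this becomes a Riemann sum for $\int \dd t/t$, and the constraint closes exactly at $\alpha=e$.

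\emph{Main obstacle.} The difficult step is tuning the group sizes, weights, and visibility thresholds $k(i)$ so that two requirements hold at once. First, every row still admits an integral packing certificate, which forces $\sum_h c_{ih}\le 1$ and constrains how many chores each agent can see. Second, the Hall-type charging is tight enough to produce the factor $(1+1/N)^N$ rather than a weaker constant such as $2$. My first attempt will be to make the continuous version work: a weight density $w(t)\propto 1/t$ on agent ranks $t\in[1,T]$, with each agent seeing chores up to rank proportional to its own rank. There I would verify that the optimal fractional allocation has ratio tending to $e$. Only then would I discretize into geometric groups and control the rounding and $\eta$ errors by taking $N$ large and $\eta\to 0$.
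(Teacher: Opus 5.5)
\textbf{There is a fatal gap in the construction itself, not just in the tuning.} You certify every row with the \emph{same} partition: chore $h$ goes to bin $h$, and every entry $c_{ih}\in\{w_h,(1-\eta')w_h,\eta\}$ is at most $w_h$. That common certificate is itself an allocation.

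Give chore $i$ to agent $i$. Her cost is $c_{ii}\le w_i$, which is within your own bound $\mu_i\le w_i$. It is also within her true share. If $k(i)\ge 1$, chore $1$ costs at least $(1-\eta')w_1=(1-\eta')w_*$ in row $i$. So in every partition the bundle $A_j$ containing it has $c_i(A_j)/w_j\ge 1-\eta'$, which gives $\mu_i\ge(1-\eta')w_i$.

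Hence every member of your family has $\gamma(I)\le 1/(1-\eta')$. Your key claim, that every allocation gives some agent $c_i(A_i)>(e-\epsilon)w_i$, is false whatever group sizes, weights and thresholds $k(i)$ you pick. For the same reason, the recursion $S_{k+1}\ge S_k(1+1/N)-\cdots$ cannot be derived: the identity matching satisfies all your Hall-type constraints at $\alpha=1$.

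More generally, if all rows admit one common certifying partition, that partition is an exact WMMS allocation. A lower-bound instance \emph{must} therefore use row-specific packings that no single allocation can realize simultaneously. Your ``main obstacle'' paragraph treats this as a tuning problem, but it is structural. Beyond this, the appearance of $e$ in your sketch rests only on an asserted Riemann sum; no counting argument is actually given.

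\textbf{This incompatibility is exactly what the paper builds in.} It uses groups of size $n_i=2^{i-1}$ with group weights $W_1=t$ and $W_i=g^{i-2}$, where $g=1+1/t$. The costs have a plateau: $v_i(h)=1/\max\{n_i,n_j\}$ for $h\in M_j$, so row $i$ sees all $2n_i-1$ items of its own and earlier blocks as equally costly. Row $i$'s certificate packs these items into the bins of groups \emph{before} $i$. It then proceeds greedily through blocks $j<i+t-3$, using the batch lemma and $tg^\ell\ge q(\ell+3)$. Every later block is placed one item per bin in its own group. This yields $\mu_i\le 1/(qn_i)$ and $\mu_1\le t/q$, with $q=(1+1/t)^{t-3}$.

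Now suppose $\alpha<q$. A nonroot agent cannot receive any item from her own or an earlier block, since such an item costs her $1/n_i>\alpha\mu_i$. So every chore is charged to its recipient at its price $1/n_j$, for a total of $K$. The budgets sum to at most $(\alpha/q)(t+K-1)$. This gives $\gamma(I_{t,K})\ge qK/(K+t-1)$, which tends to $e$ when $K=t^2$.
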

	To get the result, we show a construction of an explicit instance as in the theorem below.
	\begin{theorem}\label{thm:lower-main}
		For every integer $t\ge4$ and $K\ge t+1$, there is an explicit  instance $I_{t,K}$ with $2^K-1$ agents and $2^K-1$ chores such that
		\begin{equation}\label{eq:main-sandwich}
			\left(1+\frac1t\right)^{t-3}\frac{K}{K+t-1}
			\ \le\ \gamma(I_{t,K})
			\ \le\ \left(1+\frac1t\right)^{t-1}.
		\end{equation}
		All costs are strictly positive, additive, and identically ordered. 
	\end{theorem}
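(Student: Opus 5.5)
The plan is to build $I_{t,K}$ from $K$ geometrically growing groups and to prove the two sides of \eqref{eq:main-sandwich} separately. The lower side is the real content; the upper side only needs one explicit allocation.

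\emph{Construction.} For $k=0,\dots,K-1$, group $G_k$ will have $2^k$ agents and $2^k$ chores, giving $2^K-1$ of each. Every agent of $G_k$ gets the same weight $\propto 2^{-k}$, so each group carries the same total weight $1/K$.

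Costs will be \emph{nested and identically ordered}. Every agent ranks the chores by group index, with chores of $G_0$ the most costly. The cost an agent of $G_k$ assigns to a chore of $G_\ell$ is a fixed profile depending on $\ell-k$. I intend to use a factor $(1+1/t)$ per group step for the first $t$ steps, and plain geometric halving beyond. All entries stay strictly positive by using a tiny positive floor.

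\emph{Row-wise packing certificates.} For each row I will exhibit an integral partition witnessing \eqref{eq:packing}, to control $\mu_i$ from above. The model certificate maps chores of $G_{\ell+1}$ in pairs into the bundles of $G_\ell$, using the weight halving $w_{G_{\ell+1}}=w_{G_\ell}/2$. Chores too expensive for the row are routed one group up, which the $(1+1/t)$ slack allows for $t$ steps. A matching averaging argument gives the reverse bound on $\mu_i$. Thus $\mu_i$ will be pinned down up to the boundary losses absorbed in the factors below.

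\emph{Upper bound $\gamma(I_{t,K})\le(1+1/t)^{t-1}$.} I will give a direct allocation that hands each agent of $G_k$ one chore, shifted by a fixed number of groups. I will then check $v_i(A_i)/\mu_i\le(1+1/t)^{t-1}$ using the certificate values of $\mu_i$. This step is routine.

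\emph{Lower bound, the main obstacle.} Fix any allocation $A$ and suppose every agent $i$ has $v_i(A_i)\le\gamma\mu_i$. Because costs are identically ordered, I can argue by exchange that there is an optimal $A$ monotone in group index, i.e.\ chores of lower-index groups go to lower-index agents. Then a counting argument applies.

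Consider the prefix consisting of the first $j$ groups of chores. Together they hold $2^j-1$ chores, but the agents able to absorb them at ratio $\gamma$ have limited total ``capacity'' measured in their own normalized costs $c_i$. Comparing prefix sums for $j$ and $j+1$ gives a recursion: to avoid violating $\gamma$, the capacity must grow by at least a factor $(1+1/t)$ per step. Iterating over $t-3$ consecutive steps forces $\gamma\ge(1+1/t)^{t-3}$, up to a boundary term. That boundary term comes from the first and last $t-1$ groups, where the nested profile is truncated. Averaging the violation over the $K$ possible starting positions of the window should produce the factor $K/(K+t-1)$.

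I expect getting this accounting exact to be the hard part: the exchange or monotonicity reduction, and matching the exponent $t-3$ and the factor $K/(K+t-1)$. My first attempt will be an LP-style weighted double count. Sum the constraints $c_i(A_i)\le\gamma w_i$ with multipliers growing like $(1+1/t)^k$ across groups, and compare with a lower bound on the total assigned normalized cost.

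Finally, \cref{thm:wmms-existence} follows. Take $t\to\infty$ with $K\gg t$; then $(1+1/t)^{t-3}K/(K+t-1)\to e$, so for large $t,K$ the lower bound exceeds $e-\epsilon$.
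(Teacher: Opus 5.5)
Your proposal has a genuine gap. The lower bound on $\gamma(I_{t,K})$ carries the theorem, and it is never argued; the route you sketch would not yield the stated constants.

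The exchange step (an optimal allocation may be taken monotone in group index) is unjustified. Swapping a costlier and a cheaper chore between agents of different groups lowers one agent's cost and raises the other's, under different cost functions and budgets. Identical ordering alone therefore does not make such swaps ratio-preserving. The per-step $(1+1/t)$ recursion and the averaging over window positions are likewise not derived.

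The missing idea is a one-shot forbidden-edge argument plus price counting. In the paper, $G_i$ has $n_i=2^{i-1}$ agents, block $M_j$ has $n_j$ chores, and an agent of $G_i$ pays $1/\max\{n_i,n_j\}$ for a chore of $M_j$. The weights are $w_1=t$ and $w_i=g^{i-2}/n_i$ with $g=1+1/t$. Explicit row-wise packings give $\mu_1\le t/q$ and $\mu_i\le 1/(qn_i)$. They are built greedily, batch by batch, using the reserve condition $C\ge L+m\eta+b\eta$ and the inequality $tg^\ell\ge q(\ell+3)$. The ratio $tg^\ell/(\ell+3)$ is minimized over integers $\ell\ge0$ at $\ell=t-3$ with value $g^{t-3}$, which is where $q$ comes from.

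Now suppose an allocation had ratio $\alpha<q$. No nonroot agent could receive even one chore from her own or an earlier block, since that chore alone costs $1/n_i>\alpha\mu_i$. So every recipient values each of her chores at its common price $1/n_j$, and the total recipient cost is exactly $K$. Hence $K\le\alpha\sum_i n_i\mu_i\le\alpha(t+K-1)/q$. The factor $K/(K+t-1)$ reflects the root's oversized budget $t/q$ plus $K-1$ groups of budget $1/q$ each. No window averaging and no structural reduction of the allocation are involved.

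Your construction and upper-bound step also miss what makes this work. You give every group total weight $1/K$, but the paper's group weights are $W_1=t$ and $W_i=g^{i-2}$. The prefix capacity $R_{j-1}=tg^{j-2}$ grows geometrically while row cost grows by one unit per block. That gap is exactly the slack that lets each row's certificate push chores into earlier bins for about $t$ blocks. The root's weight $t$ is what produces the $t-1$ in $K+t-1$.

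After row scaling, the paper's costs are $\min\{1,2^{i-j}\}$, a function of $j-i$ as you wanted. But the profile is a flat plateau followed by halving, with no $(1+1/t)$ factor and no floor; the $(1+1/t)$ lives in the weights. With equal group totals the prefix capacity grows only linearly. Your still-unspecified cost profile would then have to supply this slack, which changes both the certificates and the price accounting. Nothing in the proposal shows that $(1+1/t)^{t-3}K/(K+t-1)$ and $(1+1/t)^{t-1}$ survive.

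Finally, giving each agent of $G_k$ \emph{one} chore shifted by a fixed nonzero number of groups is not a complete allocation, since group sizes double. The paper gives the first $t$ blocks to the root. For $2\le i\le K-t+1$, it gives the chores of $M_{i+t-1}$ to $G_i$, $2^{t-1}$ per agent, and nothing to the last $t-1$ groups.

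The matching bounds $\mu_1\ge t/\chi$ and $\mu_i\ge 1/(\chi n_i)$, with $\chi=g^{t-1}$, come from a capacity contradiction, not from averaging. For $i\ge2$, suppose row $i$'s makespan were below $1/(\chi W_i)$. Then every chore of the first $i+t-1$ blocks would be too large for the bins of groups $\ge i+t-1$. Yet their total row-$i$ cost is $t+1-1/n_i$, which exceeds the capacity of the earlier bins (below $t$). The root case is analogous, using the first $t+1$ blocks.
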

	By taking $K=t^2$, both endpoints of \cref{eq:main-sandwich} converge to $e$:
	\[
	\left(1+\frac1t\right)^{t-3}\frac{t^2}{t^2+t-1}\longrightarrow e,
	\qquad
	\left(1+\frac1t\right)^{t-1}\longrightarrow e.
	\]
	Hence, for every $0<\varepsilon<e-1$, a sufficiently large finite $t$ therefore gives an instance admitting no $(e-\varepsilon)$-WMMS allocation. This gives \cref{thm:wmms-existence}.
	
	It therefore suffices to prove Theorem~\cref{thm:lower-main}.
	
	\paragraph{An overview of the instance construction. }
	To establish our lower bound, we construct a family of weighted
	chore-allocation instances with a layered structure, consisting of
	$K$ groups of agents and $K$ corresponding blocks of chores, where
	both the number of agents and the number of chores in layer $i$
	equal $2^{i-1}$. The agents are assigned carefully chosen
	geometrically structured weights, while their additive cost
	functions follow a nested pattern in which all agents agree on
	the ordering of chores and every chore has strictly positive cost.
	The key feature of the construction is that each agent's weighted
	maximin share can be bounded from above by explicitly constructing
	an integral packing of the chores into bins whose capacities
	correspond to the agents' weights. These benchmark bounds force
	agents, under a sufficiently small approximation factor, to receive
	chores only from certain later layers, allowing us to derive a
	contradiction by comparing the total cost of any allocation with
	the sum of the permitted WMMS budgets. By appropriately choosing
	the parameters and letting the number of layers grow, we show
	that the resulting lower bound converges to $e$, thereby proving
	that no universal WMMS approximation factor smaller than $e$
	is possible.
	
	

	\subsection{Agent groups, weights, and additive costs}
	Fix integers $t\ge4$ and $K\ge t+1$. Let
	\begin{equation}\label{eq:lower-parameters}
		g=1+1/t,\qquad q=g^{t-3},\qquad \chi=g^{t-1}.
	\end{equation}
	We construct an instance $I_{t,K}$ with $2^K-1$ agents and an equal number of chores such that
	\[
	q\frac{K}{K+t-1}\le\gamma(I_{t,K})\le\chi.
	\]
	
	\paragraph{Agents and Weights}
	There are $K$ groups $G_1,\ldots,G_K$, with $n_i=2^{i-1}$ agents each ($i\in\{1,\ldots,K\}$). We refer to $G_1$, which consists of a single agent, as the {\em root group}, and its unique agent as the {\em root agent}. The remaining groups $G_2,\ldots,G_K$ are called {\em nonroot groups}, and their members are referred to as {\em nonroot agents}. This terminology distinguishes the first group from the others, as we will see later on, the root agent will value every chore at its assigned price, whereas a nonroot agent may value chores from her own or earlier blocks differently.
	
	Every agent in $G_i$ has the same unnormalized individual weight
	\begin{equation}\label{eq:lower-weights}
		w_1=t,\qquad w_i=\frac{g^{i-2}}{n_i},\qquad(i\ge2).
	\end{equation}
	Let $W_i=n_iw_i$. Then
	\begin{equation}\label{eq:lower-prefix}
		W_1=t,\quad W_i=g^{i-2}\ (i\ge2),\quad
		R_{j-1}:=\sum_{a<j}W_a=tg^{j-2},\quad
		\sum_{a<j}n_a=n_j-1.
	\end{equation}
	The last two identities hold for $j\ge2$. Individual weights decrease strictly: $w_2=1/2<t$ and $w_{i+1}/w_i=g/2<1$. Total weight is $tg^{K-1}$; a common normalization does not change WMMS. 
	
	\medskip
	\paragraph{Chores and Costs}

	Partition the chores into blocks $M_1,\ldots,M_K$, with $|M_j|=n_j$. For an item $h\in M_j$, every agent in group $G_i$ has cost
	\begin{equation}\label{eq:lower-costs}
		v_i(h)=\frac1{\max\{n_i,n_j\}}
		=\begin{cases}1/n_i,&j\le i,\\1/n_j,&j>i.\end{cases}
	\end{equation}
	The costs are additive, strictly positive, and nonincreasing in the same block order.

	\begin{table}[t]
		\centering
		\begin{tabular}{@{}ccccc@{}}
			\toprule
			Recipient group & $M_1$ & $M_2$ & $M_3$ & $M_4$\\
			\midrule
			$G_1$ & $1$ & $1/2$ & $1/4$ & $1/8$\\
			$G_2$ & $1/2$ & $1/2$ & $1/4$ & $1/8$\\
			$G_3$ & $1/4$ & $1/4$ & $1/4$ & $1/8$\\
			$G_4$ & $1/8$ & $1/8$ & $1/8$ & $1/8$\\
			\bottomrule
		\end{tabular}
		\caption{Per-item costs in the first four blocks of the lower-bound family. Block $M_j$ contains $2^{j-1}$ items. The plateau in row $i$ covers its own and all earlier blocks.}
		\label{tab:lower-costs}
	\end{table}

	\subsection{Upper bounds on WMMS}
	To derive a lower bound on the approximation factor, we need
	\emph{upper bounds} on the WMMS benchmarks: the smaller the
	benchmarks, the less cost any successful allocation is allowed to give to an agent. Before doing that, we need to present two elementary packing lemmas.
	\begin{lemma}[Geometric capacity inequality]\label{lem:geometric}
		For every integer $\ell\ge0$,
		\begin{equation}\label{eq:geometric}
			tg^\ell\ge q(\ell+3).
		\end{equation}
		In particular, $q\le t/3$.
	\end{lemma}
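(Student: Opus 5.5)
We need to show $tg^\ell\ge q(\ell+3)$ for all integers $\ell\ge 0$, where $g=1+1/t$ and $q=g^{t-3}$. Dividing by $q$, the claim becomes
\[
f(\ell):=t\,g^{\ell-t+3}\ \ge\ \ell+3 .
\]
Put $x=\ell+3\ge 3$. The inequality then reads $t g^{x-t}\ge x$, that is, $\phi(x):=t\,g^{x-t}-x\ge 0$ for every integer $x\ge 3$.

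The key observation is that equality holds at $x=t$ and at $x=t+1$. Indeed $\phi(t)=t-t=0$, and $\phi(t+1)=tg-(t+1)=(t+1)-(t+1)=0$. Moreover $\phi$ is strictly convex in the real variable $x$, being an exponential plus a linear function. A convex function with zeros at $t$ and $t+1$ is nonpositive on $[t,t+1]$ and nonnegative outside the open interval $(t,t+1)$. Since $t$ is an integer, no integer lies strictly between $t$ and $t+1$, so $\phi(x)\ge 0$ for every integer $x$, and in particular for all $x\ge 3$.

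If a discrete argument is preferred, one can instead check the forward differences. We have $\phi(x+1)-\phi(x)=g^{x-t}-1$, which is nonnegative exactly when $x\ge t$. Hence $\phi$ is nonincreasing on integers up to $t$ and nondecreasing from $t$ onward, so it is minimized at $x=t$, where its value is $0$. Both routes are routine; the only point needing care is that integrality of $t$ and $\ell$ is essential, since $\phi<0$ strictly inside $(t,t+1)$.

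For the consequence $q\le t/3$, take $\ell=0$ in \eqref{eq:geometric}. This gives $t\ge 3q$, i.e.\ $q\le t/3$.

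I do not expect any real obstacle. The only subtlety is recognizing the double equality at $x=t,t+1$ and using integrality, rather than attempting a crude bound such as $g^{t-3}\le e$, which would not yield the required tight form.
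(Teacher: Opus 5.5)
Your proof is correct and is essentially the paper's argument. The paper shows that $f_\ell=tg^\ell/(\ell+3)$ first decreases and then increases, using the ratio $f_{\ell+1}/f_\ell$; its minimum is $q$, attained at the tie $\ell=t-3,t-2$. This corresponds to your forward-difference analysis of $\phi$, with its double equality at $x=t,t+1$, and your convexity route is a continuous restatement of the same unimodality.
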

	\begin{proof}
		Set $f_\ell=tg^\ell/(\ell+3)$. Then
		\[
		\frac{f_{\ell+1}}{f_\ell}
		=\frac{(t+1)(\ell+3)}{t(\ell+4)}.
		\]
		The numerator minus the denominator is $\ell+3-t$. The sequence decreases up to index $t-3$, has equal values at $t-3$ and $t-2$, and then increases. Its minimum is $f_{t-3}=g^{t-3}=q$. The last assertion is the case $\ell=0$.
	\end{proof}
	
	If a collection of chores has total cost below the sum of bin
	capacities, integral packing need not follow: each bin might have
	unused space too small for the next whole chore.  The following
	standard reserve compensates for this loss.
	\begin{lemma}[Integral packing of one equal-sized batch]\label{lem:batch}
		Suppose $b$ bins have total capacity $C$ and already assigned load $L$, with no bin overloaded. A batch of $m$ items, each of size $\eta>0$, can be packed integrally into these bins whenever
		\begin{equation}\label{eq:batch}
			C\ge L+m\eta+b\eta.
		\end{equation}
	\end{lemma}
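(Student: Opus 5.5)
We pack the $m$ items one at a time, each time into a bin that still has room for it, and show that under \eqref{eq:batch} such a bin always exists. Throughout, bins are never overloaded and the current total load is $L$ plus $\eta$ times the number of items placed so far.

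Suppose $k<m$ items of size $\eta$ have already been placed, so the current total load is $L+k\eta$. We claim some bin $j$ of capacity $C_j$ and current load $L_j$ satisfies $C_j-L_j\ge\eta$. Argue by contradiction: assume every bin has residual capacity $C_j-L_j<\eta$. Summing over the $b$ bins gives
\[
C-(L+k\eta)=\sum_{j}(C_j-L_j)<b\eta.
\]
Since $k\le m-1$, this yields $C<L+(m-1)\eta+b\eta<L+m\eta+b\eta$, which contradicts \eqref{eq:batch}.

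Hence a bin with residual at least $\eta$ exists. We place the next item there, and that bin still is not overloaded. Induction on $k$ from $0$ to $m-1$ then packs all $m$ items integrally.

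The only point needing care is the strictness of the inequalities. The contradiction uses the strict bound $C_j-L_j<\eta$ for every bin, and it has slack since $k\le m-1$. So the hypothesis \eqref{eq:batch} with non-strict $\ge$ suffices. Nonnegativity of the residuals $C_j-L_j$ is guaranteed by the invariant that no bin is overloaded. No other obstacle is expected, since the argument is a direct greedy averaging step.
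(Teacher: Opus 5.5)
Your proof is correct and follows the paper's argument: place each item greedily in a bin with residual capacity at least $\eta$. If placement fails after $k<m$ items, the total residual capacity is below $b\eta$, which contradicts the residual $C-L-k\eta\ge(m-k+b)\eta>b\eta$ guaranteed by \eqref{eq:batch}.
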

	\begin{proof}
		Place each new item in any bin with residual capacity at least $\eta$. If placement fails after $s<m$ items, every bin has residual capacity less than $\eta$, hence their total residual capacity is less than $b\eta$. But \cref{eq:batch} gives residual capacity at least $(m-s+b)\eta>b\eta$, a contradiction.
	\end{proof}
	
	We are now ready to prove upper bounds on WMMS of agents.
	\begin{lemma}[WMMS bounds]\label{lem:lower-benchmarks}
		Consider the constructed instance $I_{t,K}$. Denote by $\mu_i$ the WMMS of agents in group $G_i$. It holds that
		\begin{equation}\label{eq:lower-benchmarks}
			\mu_1\le\frac tq,\qquad \text{and} \qquad 
			\mu_i\le\frac1{qn_i}\quad(2\le i\le K).
		\end{equation}
	\end{lemma}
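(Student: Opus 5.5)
The plan is to bound each $\mu_i$ by exhibiting an explicit integral partition. Since $\mu_i=w_i\min_A\max_j v_i(A_j)/w_j$, it suffices to find a partition $(B_j)$ of all chores with $v_i(B_j)\le w_j\cdot \tau_i/w_i$ for every agent $j$, where $\tau_i$ is the claimed bound. For the root this means $v_1(B_j)\le w_j/q$. For $i\ge2$, where $w_i=g^{i-2}/n_i$ and $\tau_i=1/(qn_i)$, it means $v_i(B_j)\le w_j/(qg^{i-2})$. Set the scale $s_i=q$ for $i=1$ and $s_i=qg^{i-2}$ for $i\ge2$. The task then becomes: pack the items of row $i$, with sizes $s_i v_i(h)$, integrally into bins whose capacities are the individual weights $w_j$ of all agents. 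Note that the group capacities are $W_1=t$ and $W_a=g^{a-2}$, with prefix capacity $R_{a-1}=tg^{a-2}$ by \eqref{eq:lower-prefix}.

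I would pack row $i$ greedily, block by block in nonincreasing item size, using \cref{lem:batch} repeatedly.

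\textbf{Plateau batch.} The first batch is the plateau, i.e.\ all items of $M_1,\dots,M_i$. These are $\sum_{a\le i}n_a=2n_i-1$ items, each of size $s_i/n_i$, with total at most $2s_i$. (For the root, the plateau is the single item of $M_1$, of size $q$.)

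\textbf{Later blocks.} Then, for each $j=i+1,\dots,K$ in turn, add block $M_j$. It has $n_j$ items of size $s_i/n_j$, so its total is exactly $s_i$. When batch $M_j$ is added, I would use as bins all agents in groups $G_1,\dots,G_j$. There are $n_{j+1}-1<2n_j$ such bins, so the reserve term $b\eta$ in \eqref{eq:batch} is less than $2s_i$. The accumulated load before the batch is at most $s_i(2+(j-i-1))$. Hence \eqref{eq:batch} requires roughly
\[
R_j=tg^{j-1}\ \ge\ s_i\,(j-i+3),
\]
possibly with a slightly different additive constant once the reserve is counted exactly. Using bins that were already available for earlier batches is harmless, because \cref{lem:batch} only needs nonoverloaded bins with the stated total residual.

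\textbf{Capacity check.} After dividing by $g^{i-2}$, the required inequality reads $tg^{\ell}\ge q(\ell+c)$ with $\ell=j-i+1$. This is exactly the form supplied by \cref{lem:geometric} with $\ell+3$ on the right, and the extra shift in $\ell$ gives the needed slack. The plateau batch is checked the same way with $j=i$, where the available capacity is $R_i=tg^{i-1}$. For the root one uses $s_1=q$ and $q\le t/3$. Since the first bin already has capacity $t$, the single item of size $q$ and then the blocks fit by the same computation.

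The main obstacle I expect is the bookkeeping of the reserve term: getting the additive constants to line up exactly with $q(\ell+3)$ rather than something weaker. That requires choosing precisely which groups' bins are opened for each batch; $G_1,\dots,G_j$ versus $G_1,\dots,G_{j-1}$ shifts $\ell$ by one. It also requires bounding the number of bins by $n_{j+1}-1$ via $\sum_{a<j}n_a=n_j-1$. If the constants come out one short, I would open one more group per batch, since each extra group multiplies capacity by $g$, and recheck against \cref{lem:geometric}.
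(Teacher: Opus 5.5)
Your proposal is correct and is essentially the paper's argument: the same scaling $s_i=qW_i$, greedy block-by-block packing via Lemma~\ref{lem:batch}, and capacity supplied by Lemma~\ref{lem:geometric}. The one difference is that you open $G_1,\dots,G_j$ for block $M_j$, whereas the paper opens only the groups strictly before $j$ and switches to one item per bin of the fresh group $G_j$ once $j\ge i+t-3$; your variant works equally well. Your ``roughly $s_i(j-i+3)$'' omits the batch's own mass $s_i$, and the exact requirement is $R_j\ge s_i(j-i+4-1/n_i-1/n_j)$. Lemma~\ref{lem:geometric} at $\ell=j-i+1$ gives $R_j\ge s_i(j-i+4)$, and for the root it gives $tg^{j-1}\ge q(j+2)$ at $\ell=j-1$, so the constants close without opening any extra group.
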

	\begin{proof}
		We construct separate whole-item packings for each valuation row into the actual agent bins, whose capacities are the weights in \cref{eq:lower-weights}.
		
		Fix group $i\ge2$ and scale its costs by $a_i=qW_i$. All $2n_i-1$ items in the first $i$ blocks have scaled size $a_i/n_i$. Pack this initial batch using only groups strictly before $i$. Then process successive blocks $M_j$ using only groups strictly before $j$, while $j<i+t-3$ and $j\le K$.
		
		At stage $j=i+\ell$, the cumulative scaled load is
		\[
		C_j=a_i\left(\ell+2-\frac1{n_i}\right).
		\]
		There are $n_j-1$ available bins, and the current item size is $a_i/n_j$; for $j=i$, the current batch is the entire initial prefix. By \cref{eq:lower-prefix,eq:geometric},
		\begin{align*}
			R_{j-1}
			=tW_ig^\ell
			\ge a_i(\ell+3)
			>a_i\left(\ell+2-\frac1{n_i}\right)
			+(n_j-1)\frac{a_i}{n_j}
			=C_j+(n_j-1)\frac{a_i}{n_j}.
		\end{align*}
		All earlier assignments lie in the currently available bins. \Cref{lem:batch} therefore completes each stage integrally, by induction.
		
		For every remaining block $j\ge i+t-3$, we have $W_j=W_i g^{j-i}\ge qW_i=a_i$. Give one item of $M_j$ to each of the $n_j$ bins in $G_j$; it has scaled size $a_i/n_j\le w_j$. Those bins are fresh, since earlier stages used only groups preceding their current block. If the cutoff exceeds $K$, the greedy stages already complete the whole instance. Thus $a_iv_i$ packs inside the weights, and
		\[
		\mu_i\le\frac{w_i}{a_i}=\frac1{qn_i}.
		\]
		
		For the root group $1$, scale its cost by $q$. Put the first item, of scaled size $q\le t/3$, in the sole root bin of capacity $t$. Process blocks $M_j$ for $2\le j\le t-2$ using groups strictly before $j$. At stage $j$, cumulative scaled load is $qj$, item size is $q/n_j$, and there are $n_j-1$ available bins. By \cref{eq:geometric} with $\ell=j-2$,
		\[
		R_{j-1}=tg^{j-2}\ge q(j+1)
		>qj+(n_j-1)q/n_j.
		\]
		Apply \cref{lem:batch}. For $j\ge t-1$, $W_j=g^{j-2}\ge q$, so block $M_j$ fits one item per bin in its own fresh group. This packs $qv_1$ within the weights and gives $\mu_1\le t/q$.
	\end{proof}
	

	\subsection{Proof of Theorem~\ref{thm:lower-main}}
	We establish both lower and upper bounds on the optimal WMMS approximation factor $\gamma(I_{t,K})$ for the instance $I_{t,K}$.
	
	\paragraph{Lower bound}
	Give each item of block $M_j$ price $p_h=1/n_j$. Total price is $K$. Consider an arbitrary allocation and let $\alpha$ be its maximum WMMS ratio. If $\alpha\ge q$, the desired lower bound is immediate. Suppose $\alpha<q$.
	
	For every nonroot group $i\ge2$, an item from its own or an earlier block costs $1/n_i$, whereas
	\[
	\alpha\mu_i\le\frac{\alpha}{qn_i}<\frac1{n_i}.
	\]
	Such an assignment is impossible. Hence every item received by a nonroot agent comes from a later block and has recipient cost equal to its price. The root also values every item at its price. Therefore the allocation's total recipient cost is exactly $K$. By \cref{lem:lower-benchmarks},
	\[
	K\le\alpha\sum_{i=1}^K n_i\mu_i
	\le\frac{\alpha}{q}(t+K-1).
	\]
	It follows that
	\begin{equation}\label{eq:lower-obstruction}
		\gamma(I_{t,K})\ge q\frac{K}{K+t-1}.
	\end{equation}
	The derivation does not restrict how individual chores are distributed within or across blocks. Every forbidden edge was ruled out using the actual WMMS requirement.
	
	\paragraph{Upper bound} We first derive lower bounds on the WMMS of the agents who receive chores in the allocation constructed to prove the lower bound on $\gamma(I_{t,K})$.
	
	\begin{lemma}
		\label{lem:lower-shift}
		For $\chi=g^{t-1}$,
		\[
		\mu_1\ge t/\chi,\qquad
		\mu_i\ge1/(\chi n_i)\quad(2\le i\le K-t+1).
		\]
	\end{lemma}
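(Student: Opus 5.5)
The plan is to argue directly from the definition of WMMS. Since $\mu_i=w_i\min_A\max_j v_i(A_j)/w_j$, it suffices to show the following: if some complete integral partition $(A_k)$ satisfies $v_i(A_k)\le Tw_k$ for every agent $k$, then $T\ge 1/(\chi w_in_i)$. For $i\ge2$ we have $w_in_i=W_i=g^{i-2}$, so the target is $T\ge 1/(\chi g^{i-2})$. For the root the target is $T\ge 1/\chi$, which gives $\mu_1=tT\ge t/\chi$. We assume for contradiction that $T$ is strictly smaller than the target. The argument combines an indivisibility step, which forces cheap-looking items into heavy early bins, with a total-capacity step over the prefix of groups.

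\emph{Step 1 (forbidden bins).} Take an item $h\in M_j$ with $j\le i+t-2$, where $i\ge2$. Its cost to agent $i$ is $1/\max\{n_i,n_j\}$, and it must sit in a bin of some group $G_k$ with $Tw_k\ge v_i(h)$. Substituting $w_k=g^{k-2}/2^{k-1}$ and the assumed bound on $T$, this requires
\[
g^{k-2}2^{-(k-1)}>g^{t-1}g^{i-2}2^{-(\max\{i,j\}-1)}.
\]
We then check that this fails for every $k\ge j$. Write $k=j+b$ with $b\ge0$ and compare the powers of $g$ and of $2$. The factor $2^{-b}$ never helps, and the exponent of $g$ is short by at least $i+t-1-j-b$. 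Moreover, $g<2$ makes any surplus in $g$ lose against $2^{-b}$. The conclusion is that every item of $M_1\cup\dots\cup M_j$ lies in groups $G_1,\dots,G_{j-1}$. This exponent bookkeeping, together with the case split $j\le i$ versus $j>i$, is the step we expect to need the most care.

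\emph{Step 2 (capacity count).} Choose $j=i+t-2$; the hypothesis $i\le K-t+1$ guarantees $j\le K$. Agent $i$'s cost for blocks $M_1,\dots,M_j$ is
\[
(2n_i-1)/n_i+(j-i)=t-1/n_i,
\]
because the first $i$ blocks are a plateau of $2n_i-1$ items of cost $1/n_i$, and each later block has total cost $1$. By \cref{eq:lower-prefix}, the available capacity is
\[
TR_{j-1}=Ttg^{i+t-4}<\frac{t}{g}=t-\frac{t}{t+1}.
\]
Since $1/n_i\le 1/2<t/(t+1)$, the cost exceeds the capacity, which is a contradiction.

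\emph{Root.} For the root the same two steps apply with $T<1/\chi$ and $j=t\le K$. An item of $M_j$ costs $1/n_j$, and it needs $g^{k-2}2^{-(k-1)}>g^{t-1}2^{-(j-1)}$. This is impossible for $k\ge j$ when $j\le t$, so blocks $M_1,\dots,M_t$, of total cost $t$, must fit into groups before $G_t$. Their capacity is $Ttg^{t-2}<t/g<t$, giving the contradiction.
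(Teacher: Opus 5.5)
Your strategy matches the paper's: exclude the light late bins, then compare agent $i$'s cost of a block prefix with the capacity $TR_{J-1}$ of the earlier groups. Your Step~2 arithmetic is correct. Taking the cutoff $J=i+t-2$ (and $t$ for the root) instead of the paper's $i+t-1$ (and $t+1$) only moves the slack from the capacity count into the exclusion step.

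The gap is in Step~1 as stated and justified. You claim that an item of $M_j$ with $j\le i+t-2$ fits in no group $G_k$ with $k\ge j$. This is false when $j<i$. In that case the right-hand side carries $2^{-(i-1)}$, and for $j\le k<i$ the left-hand side gains a factor $2^{i-k}$. So your remark that ``the factor $2^{-b}$ never helps'' only covers $j\ge i$. For instance, take $t=4$, $i=4$, $j=k=2$. Your requirement reads $2^{-1}>g^{5}2^{-3}\approx0.381$, which holds. Concretely, for $T$ just below $g^{-5}$ a bin of $G_2$ has capacity $T/2\approx0.164$, while the item costs agent $4$ only $1/8$. Hence your argument does not establish the intermediate conclusion that ``every item of $M_1\cup\dots\cup M_j$ lies in $G_1,\dots,G_{j-1}$''. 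For $j=i+t-2$ the conclusion happens to be true, but your derivation of it passes through this invalid case.

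Step~2 only needs the exclusion for the single cutoff $J=i+t-2$. That exclusion is true and easy if you argue uniformly rather than block by block. Since $j\le J$ and $i\le J$, every item of $M_1\cup\dots\cup M_J$ costs agent $i$ exactly $1/\max\{n_i,n_j\}\ge 1/n_J$. Individual weights strictly decrease, so every bin of a group $G_k$ with $k\ge J$ has capacity at most $Tw_J<g^{-(t+i-3)}g^{i+t-4}/n_J=1/(gn_J)<1/n_J$. This is exactly the paper's exclusion step. With it, your count $t-1/n_i>t/g>TR_{J-1}$ finishes the proof, in fact for all $2\le i\le K-t+2$.

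Your root case stands. There the exclusion for $k\ge j$ holds for all $k\ge 2$. It fails only at $k=j=1$, the root bin, where your formula for $w_k$ does not apply, and that case is irrelevant because only groups $k\ge t$ matter.
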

	\begin{proof}
		For root group, suppose its makespan $\tau<1/\chi$ and consider the first $j=t+1$ blocks. Every one of these items costs at least $1/n_j$. Every bin in group $j$ or later has capacity at most $\tau w_j=\tau\chi/n_j<1/n_j$, since individual weights decrease. Thus all selected items must use earlier bins, of total capacity $\tau R_{j-1}=\tau t\chi<t$. Their total cost is $t+1$, a contradiction. Hence $\mu_1=t\tau\ge t/\chi$.
		
		For nonroot group $i\ge2$, put $j=i+t-1\le K$. If $\tau<1/(\chi W_i)$, no item in the first $j$ blocks fits in group $j$ or later, because each costs at least $1/n_j$ while $\tau w_j=\tau\chi W_i/n_j<1/n_j$. Earlier capacity is $\tau R_{j-1}=\tau t\chi W_i<t$. But the selected row-$i$ cost is $2-1/n_i+j-i=t+1-1/n_i>t$. Thus $\tau\ge1/(\chi W_i)$, and $\mu_i=w_i\tau\ge1/(\chi n_i)$.
	\end{proof}
	
	Give the first $t$ blocks to the root. Its load is $t\le\chi\mu_1$. For each $2\le i\le K-t+1$, give block $M_{i+t-1}$ to group $G_i$, with exactly $2^{t-1}$ items per agent. This is integral because $n_{i+t-1}=2^{t-1}n_i$. Each recipient's cost is $1/n_i\le\chi\mu_i$. The last $t-1$ groups receive nothing. All chores are allocated, proving 
	\[
	\gamma(I_{t,K})\le\chi=g^{t-1}.
	\]
	Together with \cref{eq:lower-obstruction}, this proves \cref{eq:main-sandwich} and thus Theorem~\ref{thm:lower-main}.

	\section{Conclusion}\label{sec:conclusion}
	We studied the allocation of indivisible chores among agents with arbitrary positive weights and additive nonnegative costs, with the goal of guaranteeing each agent a bounded multiple of her weighted maximin share. We showed that every instance admits an $\alpha_\star$-WMMS allocation with $\alpha_\star<4.124234528$, and that a $4.12424$-WMMS allocation can be computed in polynomial time for rational inputs. The upper bound requires neither rounding of weights or costs nor a common ordering of the agents' cost functions; it relies only on the existence of a row-wise integral packing for each agent. On the lower-bound side, we constructed instances with strictly positive, identically ordered costs whose optimal WMMS factors converge to $e$. Together, these results narrow the previously best-known gap of $[2,12]$ to $[e,\,4.124234528]$.
	
	Closing the remaining gap is the main open problem, and our analysis indicates where progress might come from. On the upper-bound side, the guarantee $\alpha(\rho,r)=\rho+\beta(\rho,r)$ consists of two separate losses: the fractional bound $\beta(\rho,r)\approx 3.0218$ obtained from the multiscale dual-price inequalities, and an additive loss of $\rho\approx 1.1024$ incurred by rounding, which charges one eligible chore per agent. The two parts are optimized jointly through the eligibility threshold $\rho$, but they are analyzed independently. An improvement could come from stronger packing inequalities than the capped-price inequalities of \cref{lem:capped}, from a rounding procedure that exploits the structure of the WMMS packings rather than only the fractional loads and the largest supported costs, or from treating large and small chores jointly. We leave open whether either loss is unavoidable for approaches of this type.
	
	On the lower-bound side, our obstruction combines two ingredients: individually infeasible agent--chore pairs and a linear comparison between the total price of the chores and the sum of the permitted budgets. As shown in \cref{sec:lower}, the constructed family itself admits allocations whose factors converge to $e$, so this construction cannot yield a stronger bound. Improving on $e$ will therefore require instances with a different structure, and possibly an argument that captures integrality obstructions beyond price-versus-budget counting. It also remains open whether the identically ordered case, to which our lower-bound instances belong, is as hard as the general case.
	
	\paragraph{Use of generative AI.}
	The authors used GenAI iteratively to refine and improve an initial draft of this paper. The authors have verified all proofs and accept full responsibility for the accuracy and correctness of the results presented in the paper.

	\bibliographystyle{plain}
	\bibliography{ref}

	\newpage
	
	\section{Appendix}
	
	\begin{proof}[Proof of Lemma~\ref{lem:compute:alpha}]
		Set $x=r/\rho>1$ and $\ell=1+\ln\rho$.  Then
		\begin{equation}\label{eq:upper-alpha-x}
			\alpha(\rho,\rho x)
			=\rho\left(1+\frac{x(1+\ln x)}{\ell x-1}\right).
		\end{equation}
		For fixed $1<\rho<e$, the derivative of the fractional term in $x$
		has the sign of
		\[
		x(1+\ln\rho)-2-\ln x.
		\]
		It vanishes at a unique $x>1$.  At that value,
		\[
		x(1+\ln\rho)=2+\ln x,
		\qquad \beta(\rho,\rho x)=\rho x.
		\]
		Express $\rho=\exp((2+\ln x)/x-1)$ and minimize the remaining
		one-variable expression
		\[
		(1+x)\exp\left(\frac{2+\ln x}{x}-1\right).
		\]
		Its unique interior minimizer satisfies
		\begin{equation}\label{eq:upper-stationary-condition}
			x^2=(1+x)(1+\ln x).
		\end{equation}
		This gives the numerical parameter values
		\[
		\rho_*\approx1.1024340960285306,
		\quad r_*\approx3.0218004310998404,
		\quad\alpha_*\approx4.124234527128371.
		\]
		The values of $\rho_*$ and $r_*$ are \emph{not} required as exact
		inputs in the existence proof.
	\end{proof}
	
\end{document}